\documentclass[12pt]{article}
\usepackage[scaled]{helvet}
\usepackage[nodisplayskipstretch]{setspace}

\usepackage[utf8]{inputenc}
\usepackage{amsmath, amsthm, bm}
\newtheorem{assumption}{Assumption}
\usepackage{amsfonts,comment}
\usepackage{multicol, booktabs, bbm, color, xcolor}
\usepackage{natbib}
\usepackage{graphicx,tcolorbox,xcolor}
\usepackage{subcaption}
\usepackage{lscape}
\graphicspath{{./img/}}

\usepackage{mathtools}

\DeclarePairedDelimiter\floor{\lfloor}{\rfloor}

\newcommand{\oA}{\overline{A}}
\newcommand{\oL}{\overline{L}}
\newcommand{\indep}{\perp \!\!\! \perp}

\usepackage{titlesec}
%\titleformat{\section}{\normalsize\bfseries}{\thesection}{1ex}{}
%\titleformat{\subsection}{\normalsize\bfseries}{\thesubsection}{1ex}{}
%\titleformat{\subsubsection}{\normalsize\bfseries}{\thesubsubsection}{1ex}{}
%\titlespacing{\section}{0em}{1em}{0em}
%\titlespacing{\subsection}{0em}{0em}{0em}
%\titlespacing{\subsubsection}{0em}{0em}{0em}
%------- THM Environments -----------%
% ---------------------
% Upright font, bold title, not much space
\newtheoremstyle{asu}% name
  {2pt}%      Space above, empty = `usual value'
  {}%      Space below
  {}% Body font
  {}%         Indent amount (empty = no indent, \parindent = para indent)
  {\bfseries}% Thm head font
  {.}%        Punctuation after thm head
  {.5em}%     Space after thm head: " " = normal interword space;
        %       \newline = linebreak
  {}% Thm head spec
% Personal Theoremstyle to 
\newtheoremstyle{thm}% name
  {}%      Space above, empty = `usual value'
  {2pt}%      Space below
  {\itshape}% Body font
  {}%         Indent amount (empty = no indent, \parindent = para indent)
  {\bfseries}% Thm head font
  {.}%        Punctuation after thm head
  {.5em}%     Space after thm head: " " = normal interword space;
        %       \newline = linebreak
  {}% Thm head spec
% Upright font, bold title, little more space
\newtheoremstyle{remarks}% name
  {3pt}%      Space above, empty = `usual value'
  {4pt}%      Space below
  {}% Body font
  {}%         Indent amount (empty = no indent, \parindent = para indent)
  {\bfseries}% Thm head font
  {.}%        Punctuation after thm head
  {.5em}%     Space after thm head: " " = normal interword space;
        %       \newline = linebreak
  {}% Thm head spec

% ---------------------
% \theoremstyle{thm}
% ---------------------
\newtheorem{theorem}{Theorem}[section]

\newtheorem{lemma}[theorem]{Lemma}

% ---------------------
\theoremstyle{asu}
% ---------------------
%\newtheorem{assumption}{Assumption}[section]
%\newtheorem{causalassumption}{Causal Assumption}[section]
% ---------------------
% \theoremstyle{remarks}
\theoremstyle{definition}
% ---------------------

\newtheorem{remark}[theorem]{Remark}

% DON'T change margins - should be 1 inch all around.
\addtolength{\oddsidemargin}{-.5in}%
\addtolength{\evensidemargin}{-1in}%
\addtolength{\textwidth}{1in}%
\addtolength{\textheight}{1.7in}%
\addtolength{\topmargin}{-1in}%
\setlength{\footskip}{0.2in}

\begin{document}
\def\spacingset#1{\renewcommand{\baselinestretch}%
{#1}\small\normalsize} \spacingset{1}

\title{A structural nested rate model for estimating the effects of time-varying  exposure on recurrent event outcomes in the presence of death}
\date{}
\author{Daniel Mork\\
Department of Biostatistics\\ 
Harvard T.H. Chan School of Public Health\\\\
Robert L. Strawderman\\
Department of Biostatistics and Computational Biology\\
University of Rochester\\\\
Michelle Audirac\\
Department of Biostatistics\\ 
Harvard T.H. Chan School of Public Health\\\\
Francesca Dominici\\
Department of Biostatistics\\ 
Harvard T.H. Chan School of Public Health\\\\
Ashkan Ertefaie\\
Department of Biostatistics, Epidemiology and Informatics\\
University of Pennsylvania}
\maketitle

\newpage
\begin{abstract}
Assessing the causal effect of time-varying exposures on recurrent event processes is challenging in the presence of a terminating event. Our objective is to estimate both the short-term and delayed marginal causal effects of exposures on recurrent events while addressing the bias of a potentially correlated terminal event. Existing estimators based on marginal structural models and proportional rate models are unsuitable for estimating delayed marginal causal effects for many reasons, and furthermore, they do not account for competing risks associated with a terminating event. To address these limitations, we propose a class of semiparametric structural nested recurrent event models and two estimators of short-term and delayed marginal causal effects of exposures. We establish the asymptotic linearity of these two estimators under regularity conditions through the novel use of modern empirical process and semiparametric efficiency theory. We examine the performance of these estimators via simulation and provide an R package \textit{sncure} to apply our methods in real data scenarios. Finally, we present the utility of our methods in the context of a large epidemiological study of 299,661 Medicare beneficiaries, where we estimate the effects of fine particulate matter air pollution on recurrent hospitalizations for cardiovascular disease.
\end{abstract}

\noindent%
{\it Keywords:}  Causal inference, semiparametrics, Robinson's transformation
\vfill

%%%%%%%%%%%%%%%% Introduction %%%%%%%%%%%%%%%%%%%%%%
\newpage
\spacingset{1.9} % DON'T change the spacing!
\section{Introduction}
\addtolength{\textheight}{.5in}%

%%%%%%%%%%%%%%%%%%%%%%%%%%%%%%%%%%%%%%%%%%%%%%%%%%%%
The detrimental health effects due to long-term exposure to air pollution are well documented \citep{WorldHeathOrganization2016AmbientDisease}. Fine particulate matter (PM$_{2.5}$), in particular, has been shown to increase the risk of cardiovascular disease (CVD) and mortality \citep{Brook2010ParticulateDisease,Correia2013The2007,Di2017AirPopulation,Miller2007Long-TermWomen}. Most epidemiological research on the adverse effects of air pollution has considered the time to the first hospitalization or the time to death as a result \citep{Gerard2001TheEpidemiology, Shah2013GlobalMeta-analysis,Zhang2022AirInfarction}. %However, longitudinal studies consist of repeated measurements of individual and environmental characteristics as well as multivariate outcomes, including recurrent hospitalizations and doctor visits linked to specific diseases. 
However, to fully understand and quantify the burden of air pollution health effects, it is essential to consider the recurrent aspect of non-terminal health outcomes, as well as the time-varying nature of exposure and potential confounders.

%There is a rich body of literature on the analysis of recurrent events. 
Despite a rich body of work detailing methods for the analysis of recurrent event data, limited attention has been given to developing methods for assessing the causal effect of exposures on a nonterminal health outcomes (e.g. hospitalization), in the presence of a terminating event (e.g. death). Inverse probability weighting-based approaches have been proposed to adjust for death- and event-dependent censoring \citep{RLS2000,cook2009robust, schaubel2010estimating}, but these methods are limited to a single time point measure of exposure and rely on strong independent censoring assumptions.  \cite{Jensen2016AHospitalisations} proposed a marginal structural model to adjust for time-dependent confounders when studying the causal effect of a time-varying exposure on a recurrent event. However, inference for the corresponding estimators is based on correctly specified parametric models for the weight functions, a condition unlikely to be satisfied. Additionally, these methods may exhibit instability when there is a long trajectory of exposures. \cite{Su2020DoublyData} proposed a doubly robust semiparametric estimator multiplicative rate model for recurrent event data; see also \citet{Su2022CausalPseudo-observations}.  
However, their method does not account for time-varying exposures, confounding, or terminal events. Several other recurrent event methods adjust for correlated terminal events, but do not consider exposures that vary over time \citep{Schaubel2006AData, Liu2004SharedEvent,Ye2007SemiparametricEvents,Sun2012AnEvent,Yu2013ACoefficients,Chen2012SemiparametricEvents}. Proportional rate models provide an alternative approach  that allow for inclusion of external time-varying covariates (i.e., covariates that can be observed even after death); see, for example, \citet{ghosh2002marginal,Miloslavsky2004RecurrentCensoring, Sun2009RegressionEffects} and \citet{Li2016RecurrentCovariates}. Under these approaches, the target estimate is the direct effect of time-varying exposures on the event rate, without accounting for the potential indirect effect that is mediated through future time-varying covariates.  Hence, these methods fail to provide an unbiased estimate of the marginal causal effects of exposures, something that is often of interest to policy makers.

 Structural models have been used in survival analysis to estimate the causal effect of a given exposure on failure times \citep{robins1994adjusting,hernan2005structural, picciotto2016g}. \cite{martinussen2011estimation} developed a method to estimate a causal effect in the context of a point exposure assuming an additive hazard scale;
see also \cite{ dukes2019doubly}. \cite{seaman2020adjusting} proposed a class of semiparametric structural nested cumulative survival time models that can handle exposures varying over time, extending the work of \cite{picciotto2012structural} to continuous time. The main advantage of structural nested models is their ability to estimate the immediate and lagged causal effects of exposures on a failure time. However, these methods are not applicable to recurrent event settings.

%In this paper, To our knowledge, we develop the first  approach to estimate the short-term and delayed causal effects of exposure on recurrent events while accounting for bias arising from a potentially correlated terminal event. Specifically, we propose a novel class of semiparametric structural nested cumulative number of recurrent events model (SNCURE) and propose two estimators of its parameters. The SNCURE generalizes the structural model of \cite{seaman2020adjusting} to recurrent event settings, which warrants accounting for bias due to competing risks of a possibly correlated terminal event. Our first estimator relies on parametric modeling assumptions for the exposure model and does not require parameterizing the effect of time-varying confounders in the recurrent event rate model. Our second estimator relaxes the parametric assumption in the exposure model but requires modeling the effect of time-varying confounders the recurrent event rate model. This estimator is more robust as, under certain rate assumptions, it enables root-n inference while modeling nuisance parameters nonparametrically.   We also show that the latter estimator is more efficient compared to the former one. We establish that under certain regularity conditions, our estimators are asymptotic linearity. 

In this work, we propose a new statistical learning framework to estimate  the short-term and delayed causal effects of time-varying exposures on recurrent events while accounting for bias arising from a potentially correlated terminal event. Specifically, we introduce a novel class of semiparametric Structural Nested Cumulative Number of Recurrent Events models (SNCURE) and propose two estimators for these causal effects. Both estimators are grounded in structural models for the potential recurrent event rate. Our first estimator relies on a parametric assumption for the exposure model but does not require parameterizing the effect of time-varying confounders for the recurrent event rate (i.e., the outcome model). Our second estimator relaxes the parametric assumption in the exposure model but requires modeling the effect of time-varying confounders on the recurrent event rate model. Compared to the first estimator, the second estimator is more robust, as it allows for nonparametric estimation of nuisance parameters while still enabling root-n inference under suitable regularity and rate conditions. 
%We apply our method to estimate the effects of fine particulate matter air pollution on hospitalization for recurrent cardiovascular disease among a sample of 299,661 Medicare beneficiaries residing in the Northeast region of the United States.

Our work makes several novel contributions. First, to the best of our knowledge, we propose the first approach to estimate the short-term and delayed causal effects of time varying exposure on recurrent events while accounting for bias arising from a potentially correlated terminal event. The proposed SNCURE generalizes the structural model of \cite{seaman2020adjusting} to recurrent event settings by explicitly addressing bias due to competing risks of a potentially correlated terminal event.
Second, in addition to an estimator where nuisance functions are modeled parametrically, we introduce an estimation procedure that allows nuisance functions to be estimated nonparametrically. Third, we characterize the corresponding influence functions and identify the necessary conditions for the asymptotic linearity of the estimators. Notably, \cite{seaman2020adjusting} do not derive influence functions for their estimators. As such, our results provide a theoretical foundation that extends to their setting as a special case, where the event process corresponds to a terminal event rather than a recurrent one.

Our paper proceeds as follows. In Sections 2 and 3 we introduce notation and the structural nested modeling approach. Section 4 details our two estimators, including asymptotic results. We provide empirical evidence supporting the asymptotic results in a comprehensive data simulation in Section 5. Finally, in Section 6 apply our method to estimate the effects of fine particulate matter air pollution on hospitalization for recurrent cardiovascular disease among a sample of 299,661 Medicare beneficiaries residing in the Northeast region of the United States.

\section{Notation and Assumptions}
Figure \ref{fig:study_design} visualizes the study design in the context of our data illustration. We consider $n$ individuals indexed by $i$ and observed in continuous time, $t$. Each individual is followed for a baseline period that spans $M$ months and then during a study period that begins at time $t=0$ until $t=\tau$. We stop following an individual early (i.e., before time $\tau$) if they die or leave our study cohort. The time of death or censoring is indicated by $D$ and $C$, respectively, where $C=\tau$ indicates administrative censoring. We use $k=\floor{t}$ as a discrete index (e.g., representing whole days, months, or years since $t=0$) of the time-varying data where $\floor{x}$ is the floor function and $K=\floor{\tau}$. 
%In the context of our motivating problem, $k$ represents the month of observation and time $t=-M$ indicates the beginning of baseline observation, i.e. the month after an individual turns 65 years old. 
We note that although $k$ is an integer, $t$ is a real number to identify the precise moment of the event or an exposure measurement is taken. We omit the index $i$ from the notation unless needed for clarity.
%Figure \ref{fig:study_design} visualizes the study design. We consider $n$ individuals indexed by $i$ and observed in continuous time, $t$. Each individual is followed for a baseline period that spans $M$ months and then during a study period that begins at time $t=0$ until $t=\tau$. We stop following an individual early if they die or leave our study cohort (e.g., leaving Medicare FFS or moving outside the study area). The time of death or censoring is indicated by $D$ and $C$, respectively, where $C=\tau$ indicates administrative censoring. We use $k=\floor{t}$ as a discrete index of the time-varying data where $\floor{x}$ is the floor function and $K=\floor{\tau}$. In the context of our motivating problem, $k$ represents the month of observation and time $t=-M$ indicates the beginning of baseline observation, i.e. the month after an individual turns 65 years old. We note that although $k$ is an integer, $t$ is a real number that allows for a more precise measurement of the time of an event. We omit the index $i$ from the notation unless needed for clarity.

\begin{figure}[!ht]
    \centering
    \includegraphics[width=\textwidth]{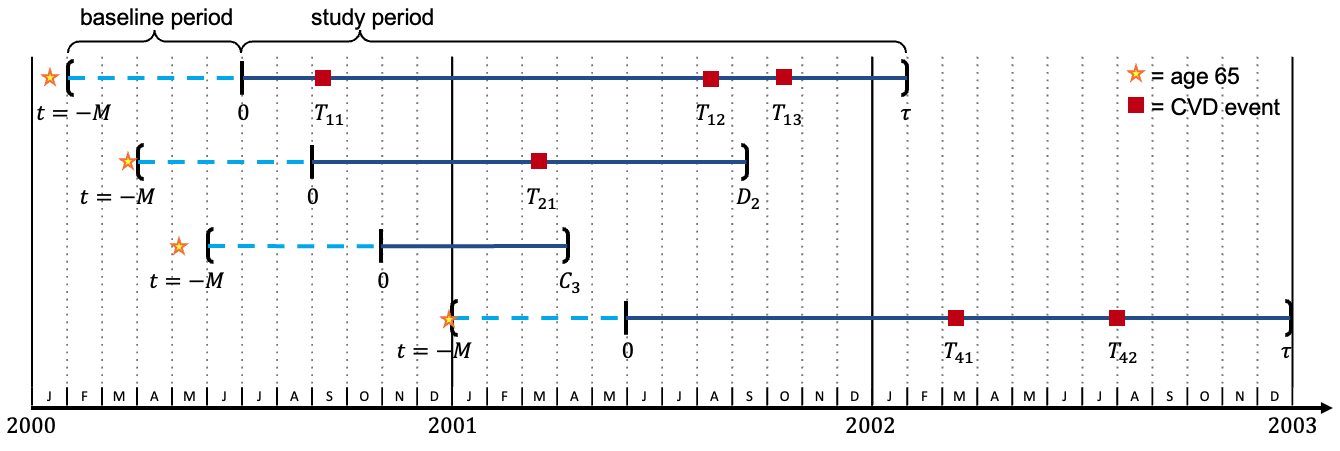}
    \caption{Diagram of the study design with $n=4$ individuals in the context of our data illustration representing a dynamic cohort of Medicare beneficiaries and recurrent cardiovascular disease hospitalizations. Each individual is followed during a baseline period spanning $M$ months after they turn age 65 (indicated by $\star$) and then during the study period beginning at $t=0$ until end of study ($\tau$), death denoted $D_i$ for individual $i$, or censoring by leaving the cohort for other reasons denoted $C_i$. Recurrent events occur at times $0<T_{i1}<\cdots <T_{i N_i(\tau)}<\tau$.%For example, in this figure, individual 1 turns 65 during January 2000 is followed February 2000 through January 2002, individual 2 turns 65 during March 2000 and dies in September 2001, individual 3 turns 65 during May 2000 and leaves Medicare FFS during April 2001, and individual 4 turns 65 in December 2000 and is followed January 2001 through December 2002. CVD hospitalization events occur at times $0<T_{i1}<\cdots <T_{i N_i(\tau)}<\tau$.
    }
    \label{fig:study_design}
\end{figure}

Let $A_{k}$ denote the average exposure  during times $t\in[k,k+1)$ while $L_{k}$ represents a vector of covariates during the same time period. For example, individual 1 in Figure \ref{fig:study_design} has measured exposure $A_{0}$ and covariates $L_{0}$ in July 2000; for individual 2 $A_{0}$ and $L_{0}$ are measured in September 2000. Next, let $\oA_{k}=[A_{-M},\ldots,A_{k}]$ denote the exposure observation history before and including month $k$ with $\oL_{k}$ representing the corresponding history of covariates. For example, $\oA_{-1}=[A_{-M},\ldots,A_{-1}]$ would indicate all exposures during the baseline period, while $\oA_{\floor{\tau}}$ is the history of exposures throughout the baseline plus study periods. For any $k< -M$, $\oA_{k}=\oL_{k}=\emptyset$.

Define $Y(t)=\mathbb{I}(X \geq t)$ to be a left continuous at risk process, where 
$X = D\wedge C$ is an indicator of alive and uncensored with $a\wedge b=\min(a,b)$. For the moment, we assume $C=\tau$ for all individuals, which requires all subjects to have the same amount of follow-up time (i.e., provided that death does not occur before $\tau$). In Section \ref{sec:censor} we consider non-administrative censoring. Define $N(t)$ as the observed recurrent event counting process for a given individual at time $t\in [k,k+1)$; let the corresponding underlying conditional rate function be $E\{dN(t)|\oA_{k},\oL_{k},D\geq t\}$.  We assume that recurrent events are not observed after $X$ (i.e., failure or censoring) so that $N(t) = N(X)$ for any $t \geq X$. Denote the observed failure process $\tilde N(t) = I(X \leq t, \Gamma=1)$ as the usual counting process for the terminal event where $\Gamma=I(D \leq C)$. In our formulation, we assume that exposure $A_k$ (e.g., PM$_{2.5}$ level) is an external variable that is available regardless of the occurrence of a censoring or terminating event. However, for covariates and exposures that vary over time, we define $L_{k}=\emptyset$ for any $k>X$. We assume that our data consists of $n$ independent, identically distributed trajectories $\{L_{-k}, A_{-k}, N(-k),\tilde N(-k)\}$, for $k=-M,-M+1,\cdots,K$. 

Define $N_{(\oA_{k-m},0)}(t)$ and $\tilde{N}_{(\oA_{k-m},0)}(t)$ for $t\in[k,k+1)$ as counterfactual counting processes when $\oA_{k-m}$ is fixed at the observed value and other exposures $A_{k-m+1},\cdots, A_{k}$ are set to zero. %We consider a counterfactual process where the $m$ most recent exposures are set to zero. Generally, for $t\in[k,k+1)$ let $N_{i(\oA_{i(k-m)},0)}(t)$ (and corresponding to terminal events, $\tilde{N}_{i(\oA_{i(k-m)},0)}(t)$) denote the counterfactual counting process when the $m$ most recent exposures are set to zero. 
Under counterfactual notation, the case where $m=0$ is equal to the observed counting process, that is, $N_{(\oA_{k},0)}(t)= N(t)$ and $\tilde N_{(\oA_{k},0)}(t)= \tilde N(t)$ for $t\in[k,k+1)$. We note that $N(t)$ and $\tilde{N}(t)$ are unchanged when intervening in future exposures. In other words, $N_{(\oA_{k-1},0)}(k)= N_{(\oA_{k},0)}(k)=\cdots =N_{(\oA_{K},0)}(k)$, and $\tilde{N}_{(\oA_{k-1},0)}(k)= \tilde{N}_{(\oA_{k},0)}(k)=\cdots =\tilde{N}_{(\oA_{K},0)}(k)$ for terminal events. Finally, let $D(\oA_{k-m},0)$ denote the counterfactual terminal event time when $\oA_{k-m}$ is fixed at the observed value and the other exposures are set to zero.

Let $\mu_{km}(t)=\mathbb{E}[A_{k-m}|\oA_{k-m-1},\oL_{k-m},D(\oA_{k-m},0) \geq t]$ denote the time-varying exposure mechanism. The following assumptions will link the counterfactual and observed data.

\begin{assumption} (Causal assumptions) \label{assump:causal} 
\begin{enumerate}
    \item[]a. Consistency:   For $t\in[k,k+1)$ $N(t)=N_{(\bar a_{k},0)}(t)$ and $\tilde N(t)=\tilde{N}_{(\bar a_{k},0)}(t)$ if $\oA_{k} = \bar a_{k}$; 
    \item[]b. No unmeasured confounding (NUC): For $t\in[k,k+1)$ there is no unmeasured confounding between exposure $A_{k}$ and  counterfactual counting process $N_{(\oA_{k-1},0)}(t)$ given observed exposure and covariate history, i.e.,
    $N_{(\oA_{k-1},0)}(t)\indep A_{k}|\oA_{k-1}, \oL_{k},D\geq k.$ 
    This also applies to the counterfactual terminal event counting process,
    $\tilde{N}_{(\oA_{k-1},0)}(t) \indep \linebreak A_{k}|\oA_{k-1}, \oL_{k},D\geq k.$ Finally, we also assume $D(\bar A_{k-1},0) \indep A_{k}|\oA_{k-1}, \oL_{k},D\geq k$.
    \item[]c. Positivity: for an exposure $a\in\mathbb{A}_k$ and all possible covariates $\{t, \bar a_{k-1},\bar l_k\}$, $t\in[k,k+1)$, the conditional density $f_{A_{ik}| \oA_{i(k-1)},\oL_{ik}} (a| t, \bar a_{k-1},\bar l_k) > \epsilon$ for some $\epsilon > 0$.
    %$\mu_{km}(t,\oA_{k-1},\oL_k) > \epsilon$ almost surely, for all $k$ and some $\epsilon > 0$. 
    %\item[]A4. Treatment assignment and censoring are conditional independent of the health status process $V(\cdot)$.
\end{enumerate}
\end{assumption}

\noindent Assumption \ref{assump:causal}b allows us to isolate the exposures from the counterfactual recurrent event and the terminal event process, in particular, for an individual still alive at time $k$, exposure $A_k$ depends only on $\oL_k$ and $\oA_{k-1}$ and not on future recurrent events or the remaining lifetime.

%%%%%%%%%%%%%%%% Recurrent events model %%%%%%%%%%%%%%%% 
\section{Structural nested rate model}

Consider the following structural model for the counterfactual counting process of interest 
\begin{multline}\label{eq:struct_rate1}
    \mathbb{E}\left[
        dN_{(\oA_{k-m-1},0)}(t)| \oA_{k-m}, \oL_{k-m}, D(\oA_{k-m-1},0)\geq t
    \right]\\
    =\mathbb{E}\left[
        dN_{(\oA_{k-m},0)}(t)|\oA_{k-m}, \oL_{k-m}, D(\oA_{k-m},0)\geq t
    \right]-
    A_{k-m} \tilde \beta_m dt,   
    %\mathbb{E}\left\{
     %   dN_{(\oA_{k-1},0)}(t)| \oA_{k}, \oL_{k}, D(\oA_{k-1},0)\geq t
    %\right\}=
    %\mathbb{E}\left\{
    %    dN_{(\oA_{k},0)}(t)|\oA_{k}, \oL_{k}, D(\oA_{k},0)\geq t
    %\right\}-
    %A_{k}\beta_m dt.
\end{multline}
%for $t \in [k+m,k+1+m)$ with $m=0,\cdots,K-k-1$. %We denote $\mathbb{E}\left\{dN_{(\oA_{k-1},0)}(t)| \oA_{k}, \oL_{k}, D(\oA_{k-1},0)\geq t\right\} = \eta_{km}(\oA_{k-1}, \oL_{k})dt$. Note that  by assumption \ref{assump:causal}b, $dN_{(\oA_{k-1},0)}(t)$ is independent of $A_k$, and thus, $\eta_{km}$ is only a function of $(\oA_{k-1}, \oL_{k})$. 
for $t \in [k,k+1)$ with $m=0,\cdots,M$. In this formulation, the target parameter
$\tilde \beta_m$ respectively captures
the short-term ($m=0$) and delayed ($m>0$) effects of exposure $A_{k}$ on the event rate; under Assumption \ref{assump:causal}b, the term $\mathbb{E}\left[
        dN_{(\oA_{k-m-1},0)}(t)| \oA_{k-m}, \oL_{k-m}, D(\oA_{k-m-1},0)\geq t
    \right]$ is independent of $A_{k-m}$, and thus,  can be interpreted as a baseline rate function. %; and, $\eta_{km}(\oA_{k-1}, \oL_{k})$ is an unknown baseline rate function of time and covariates.
We note that this model is similar to the semiparametric additive hazards model described by \citet{Dukes2019OnDifference} with additional lagged exposure effects. 
%Model (\ref{eq:semi_rate}) assumes that an increase in exposure during any month $k$ will have the same effect on the counting process across the population. If for example, an individual has zero exposure except during month $k=0$ (first month of the study period), the event rate during times $t\in[0,M)$ would be dependent on that single exposure and the short-term and delayed effects, $\beta_m$. In Section \ref{sec:discussion}, we discuss how to relax this exposure effect homogeneity assumption allowing for time-varying effects, heterogeneous effects, and effects that differ based on the number of previous events.

We begin by describing a structural model to isolate the short-term effects of exposure (i.e., $\tilde \beta_0$) on the recurrent event rate (i.e., $dN_{(\oA_k,0)}(t)$ for $t\in[k,k+1)$). Consider setting the most recent exposure
$A_k=0;$ 
%which corresponds to the counterfactual event rate given by $dN_{(\oA_{k-1},0)}(t)$. The 
then, the expected change in event rate due to lowering the most recent exposure can be described through the relation
\begin{equation}\label{eq:struct_rate0}
    \mathbb{E}\left[
        dN_{(\oA_{k-1},0)}(t)| \oA_{k}, \oL_{k}, D(\oA_{k-1},0)\geq t
    \right]=
    \mathbb{E}\left[
        dN_{(\oA_{k},0)}(t)|\oA_{k}, \oL_{k}, D(\oA_{k},0)\geq t
    \right]-
    A_{k} \tilde \beta_0 dt.
\end{equation}
We define \eqref{eq:struct_rate0} as the model $\mathcal{M}_0$. In other words, after lowering the most recent exposure to zero the counterfactual event rate would change by $A_k \tilde \beta_0$. In the context of our data analysis where $A_k\in\mathbb{R}^+$, positive $\tilde \beta_0$ implies that exposure is harmful as an increase in $A_k$ corresponds to an increase in the rate of events. It is also noted that the delayed effects of past exposure do not enter into \eqref{eq:struct_rate0} as we are only concerned with changing the most recent exposure.

The structural models then allow us to isolate each further delayed effect of exposure, $\tilde \beta_m$ ($m>0$). For the delayed effect $m$ periods prior, we consider an exposure history $\oA_{k-m}$ with more recent exposures set to zero, i.e., $A_{k-m+1}=\cdots=A_{k}=0$. In order to identify the effects, we nest the models by comparing two counterfactual rates that only differ by a single exposure value. For $t\in[k,k+1)$, we define model $\mathcal{M}_m$ by  
\begin{multline}\label{eq:struct_rate}
    \mathbb{E}\left[
        dN_{(\oA_{k-m-1},0)}(t)| \oA_{k-m}, \oL_{k-m}, D(\oA_{k-m-1},0)\geq t
    \right]\\
    =\mathbb{E}\left[
        dN_{(\oA_{k-m},0)}(t)|\oA_{k-m}, \oL_{k-m}, D(\oA_{k-m},0)\geq t
    \right]-
    A_{k-m} \tilde \beta_m dt.    
\end{multline}
In other words, $\mathcal{M}_m$ considers that in a stratum defined by $\{\oA_{k-m},\oL_{k-m}\}$ and $\{D(\oA_{k-m},0)\geq t\}$ the counterfactual event rate would be reduced by $A_{k-m} \tilde \beta_m$ if $A_{k-m}$ were set to zero.

\section{Estimation}
\subsection{Parametric exposure model estimators} \label{sec:parm}
The structural model under $\mathcal{M}_0$ provides a mean zero equation that can be used to form an unbiased estimation equation for the desired target parameter, i.e., $\tilde \beta_0$. 
Recalling the definition of $\mu_{km}(t),$ consider
$\mu_{k0}(t)=\mathbb{E}[A_{k}|\oA_{k-1},\oL_{k},D(\oA_{k},0) \geq t],$
$k \geq 0$ and $t \in [k,k+1),$
as a sequence of (short-term) time-varying exposure models. Importantly, when $t \in [k,k+1)$, the event $D(\oA_{k},0) \geq t$ is equivalent to $D \geq t;$ consequently, $\mu_{k0}(t)$ can be estimated using the observed data for each $k$. 
Moreover, following Assumption \ref{assump:causal}b, the counterfactual terminal event time $D(\oA_{k-1},0)$ is independent of $A_k$ when conditioning on a stratum defined by $\{\oA_{k-1},\oL_k\}$ and $D \geq t$, hence the proportion of individuals alive during $t\in[k,k+1)$ is equal to the proportion of individuals with $D \geq k;$ see Remark \ref{rem:indep}
for a detailed argument. 

Given a sequence of estimates for $\mu_{k0}(t)$, an estimator $\hat \beta_0$ 
can be defined as solving
\begin{equation}\label{eq:B0_est_eq}
    \sum_{i=1}^n\sum_{k=0}^K \int_{k}^{k+1}
    Y_i(t)
    \left\{A_{ik}- \hat \mu_{ik0}(t)\right\}
    \left\{dN_i(t)-  A_{ik}\beta_0dt\right\}
    =0
\end{equation}
for $\beta_0,$ where 
$\hat \mu_{ik0}(t)$ is the estimate of 
$\mu_{k0}(t)$ evaluated using the data
for subject $i.$ Under the proposed
structural model, the estimating equation
\eqref{eq:B0_est_eq} can be shown to be
unbiased for estimating $\tilde \beta_0$ 
if we replace $\hat \mu_{ik0}(t)$
with $\mu_{ik0}(t)$ for each $k$.
%
%The exposures, $A_k$, are assumed unchanged regardless of terminal event during the same period, implying that the density $f[A_k|\oA_{k-1},\oL_k,D>t]= f(A_k|\oA_{k-1},\oL_k)$ and therefore $\mu_{k0}(t)=\mu_{k0}$ is constant in $t$ for each $k$. As shown by \citet{Dukes2019OnDifference}, $\mu_{k0}$ will always be congenial with $\mathcal{M}_0$ when estimated by a generalized linear model where the dispersion parameter does not depend on $\{\oA_{k-1},\oL_k\}$.
%
Similarly, the structural equation (\ref{eq:struct_rate}) can be used to construct an estimating equation for $\tilde \beta_m$. Let
$\Delta_{km}(t)=A_{k-m}-\mu_{km}(t),$
where $\mu_{km}(t)=\mathbb{E}[A_{k-m}|\oA_{k-m-1},\oL_{k-m},D(\oA_{k-m},0) \geq t].$
Arguing similarly to the case of $\hat \beta_0,$ we can first define
$\hat \beta_1$ as the solution to 
%the following  estimating equation 
\begin{equation}\label{eq:Bm_est_eq_1}
    \sum_{i=1}^n\sum_{k=0}^K\int_k^{k+1} 
    Y_i(t) \hat w_{ik1}(t)
    \hat \Delta_{ik1}(t)
    \left\{dN_i(t)- 
        A_{ik} \hat \beta_0 dt - 
        A_{i(k-1)}\beta_1 dt
    \right\}
    =0,
\end{equation}
where, as discussed below, $\hat w_{ik1}(t)$ is an estimated weight to adjust for the proportion of individuals alive at time $t$ under the counterfactual exposure history ($m=1$) and
$\hat \Delta_{ik1}(t) = 
A_{i(k-1)}-\hat \mu_{ik1}(t).$
We then sequentially define $\hat \beta_m$ for any $m \geq 2$ as the solution to the estimating equation 
\begin{equation}\label{eq:Bm_est_eq}
    \sum_{i=1}^n\sum_{k=0}^K\int_k^{k+1} 
    Y_i(t) \hat w_{ikm}(t)
    \hat \Delta_{ikm}(t)
    \left\{dN_i(t)- 
        \sum_{j=0}^{m-1}A_{i(k-j)} \hat \beta_j dt - 
        A_{i(k-m)}\beta_m dt
    \right\}
    =0,
\end{equation}
where $\hat \beta_j, j=0,\ldots,m-1$ is the sequence of 
prior solutions. Similarly to \eqref{eq:B0_est_eq},
replacing the indicated estimates with the true
unknown functions $w_{ikm}(t)$ and $\mu_{ikm0}(t)$
leads to an unbiased sequence of estimating equations for the target parameters
$\tilde \beta_m.$

To understand the difference between \eqref{eq:B0_est_eq} and \eqref{eq:Bm_est_eq}, consider going from $dN_i(t)$ to \linebreak $dN_{i(\oA_{k-2},0)}(t)$ for $t\in[k,k+1)$: first, to go from $dN_i(t)$ to $dN_{i(\oA_{k-1},0)}$, we `blip' down by $A_k \tilde \beta_0$; next, the SNCURE defines the difference between $dN_{(\oA_{k-1},0)}$ and $dN_{(\oA_{k-2},0)}$ as $A_{k-1} \tilde \beta_1$. 
%In order to solve for $\beta_1$ we must first assume the existance of a suitable estimate 
%for $\tilde \beta_0$. Estimation proceeds sequentially, 
%replacing each previous $\beta_{m-1}$ with estimates from $\mathcal{M}_{m-1}$.

The weights $w_{ikm}(t)$ used to define
\eqref{eq:Bm_est_eq} ensure the weighted proportion of individuals with $D\geq t$, $t\in[k,k+1)$, is equal to the counterfactual proportion of individuals with $D(\oA_{k-m},0) \geq t$. In particular, for $m\geq 1,$ 
$w_{ikm}(t)$ is calculated by evaluating
\begin{equation}\label{eq:weight_def}
	w_{km}(t)= 
    \prod_{j=0}^{m-1}
    \exp\left\{
        A_{k-j}\boldsymbol\nu_{jk}(t)'\boldsymbol\alpha_m \right\}
\end{equation}
at $A_{k-j} = A_{i(k-j)},$ where $\boldsymbol\alpha_m=[\alpha_0,\ldots,\alpha_{m-1}]'$ is an $m$-length vector of parameters for the effect of exposure on the terminal event rate; and $\boldsymbol\nu_{jk}(t)$ is a vector of $j$ ones followed by $t-k$. For example, when $m=1$ and $j=0$, $\boldsymbol\alpha_m=\alpha_0$ and $\nu_{jk}(t)=t-k$; when $m=3$ and $j=2$, $\boldsymbol\alpha_m=[\alpha_0,\alpha_1,\alpha_2]'$ and $\nu_{jk}(t)=[1,1,t-k]'$. To get a better intuition for the weights, we consider the counterfactual scenarios for $t\in[k,k+1)$: if $A_k$ were instead set to zero, the number of individuals alive in $t$ is proportional to $\exp\{A_k\alpha_0(t-k)\}$; if both $A_k$ and $A_{k-1}$ were set to zero the number of individuals alive at time $t$ is proportional to $\exp\{A_{k-1}\alpha_0+A_{k-1}\alpha_1(t-k)+A_k\alpha_0(t-k)\}$ as the effect of $A_{k-1}\alpha_0$ has already occurred, while the effects $A_{k-1}\alpha_1(t-k)+A_k\alpha_0(t-k)$ are currently underway.

\begin{remark}
    The death rate parameters $\boldsymbol\alpha$ and the exposure models $\mu_{km}(t)$ can be estimated using the methods proposed by \citet{seaman2020adjusting}. For completeness, we have presented these methods in Supplementary Materials Section \ref{app:nuisance}.
\end{remark}

\begin{remark} \label{rem:indep}
    The independence condition  in Assumption \ref{assump:causal}b is based on conditioning on $D \geq k$, while the unbiasedness of the estimating equation requires conditioning on $D(\bar A_{k-1},0) \geq t$. We show that Assumption \ref{assump:causal}b is sufficient to build an unbiased estimating equation. We have
    \begin{align*}
      \mathbb{E}\left[
        dN_{(\oA_{k-1},0)}(t)\mid \oA_{k}, \oL_{k}, D(\oA_{k-1},0)\geq t
    \right] &=   \frac{\mathbb{E}\left[dN_{(\oA_{k-1},0)}(t)\mid \oA_{k}, \oL_{k}, D(\oA_{k-1},0)\geq k\right]}{p\{ D(\oA_{k-1},0)\geq t \mid \oA_{k}, \oL_{k}, D(\oA_{k-1},0)\geq k\}}\\
    &=\frac{\mathbb{E}\left[dN_{(\oA_{k-1},0)}(t)\mid \oA_{k-1}, \oL_{k}, D\geq k\right]}{p\{ D(\oA_{k-1},0)\geq t \mid \oA_{k-1}, \oL_{k}, D\geq k\}}.
    \end{align*}
    The first equality follows from Bayes' rule, $p\{D(\oA_{k-1},0)\geq t \mid \oA_{k}, \oL_{k}, dN_{(\oA_{k-1},0)}(t)=1 \}=1$ and $\mathbb{E}\left[dN_{(\oA_{k-1},0)}(t)\mid \oA_{k}, \oL_{k}\right]=\mathbb{E}\left[dN_{(\oA_{k-1},0)}(t)\mid \oA_{k}, \oL_{k}, D(\oA_{k-1},0)\geq k\right]$. The second equality follows from Assumption \ref{assump:causal}b and $D(\oA_{k-1},0)\geq k$ if and only if  $D\geq k$. The right-hand side of the equality is independent of $A_k$ which implies that the expected value of the proposed estimating equation will be zero.
\end{remark}

\begin{comment}
    
Estimation of the death rate parameters, $\boldsymbol\alpha$, is done under a constrained version of \citet{seaman2020adjusting}. Specifically, $\alpha_m$ is calculated by solving the estimating equation
\begin{equation} \label{eq:seaman}
    \sum_{i=1}^n\sum_{k=0}^K\int_{k}^{k+1}
        Y_i(t)w_{ikm}(t)\Delta_{ikm}
        \left\{d\tilde{N}_i(t)- 
            \sum_{j=0}^{m-1} A_{i(k-j)}\alpha_j dt -
            A_{i(k-m)}\alpha_m dt\right\}=0
\end{equation}
where $d\tilde{N}_i(t)$ is the counting process for the terminal event. We can use (\ref{eq:seaman}) to estimate $\alpha_0$ (i.e., $m=0$)  by setting $w_{km}(t)=1$ and $\sum_{j=0}^{m-1}A_{k-j}\alpha_j=0$. %Considering our data application, a similar exposure model estimator, $\mu_{km}=\mathbb{E}[A_{k-m}|\oA_{k-m-1},\oL_{k-m}']$, is used for estimating both $\alpha_m$ and $\beta_m$. When the weights, $w_{km}(t)$, are highly variable a stabilized version replaces $A_{k-j}$ with $\Delta_{kj}$.
\end{comment}

\subsection{Robust exposure model estimators}

The asymptotic linearity of the estimators proposed in Section \ref{sec:parm} relies on correctly specified parametric exposure models.  To allow for a nonparametric exposure model, we propose to also model the baseline rate function. For $t \in [k,k+1)$, let 
\[
\mathbb{E}\left[
        dN_{(\oA_{k-m-1},0)}(t)| \oA_{k-m}, \oL_{k-m}, D(\oA_{k-m-1},0)\geq t
    \right] \equiv d\eta_{km}(t,\oA_{k-m-1}, \oL_{k-m}).
\]
Then, following (\ref{eq:struct_rate1}), 
\begin{align}\label{eq:struct_rater}
    \mathbb{E}\left[
        dN_{(\oA_{k-m},0)}(t)|\oA_{k-m}, \oL_{k-m}, D(\oA_{k-m},0)\geq t
    \right] = d\eta_{km}(t,\oA_{k-m-1}, \oL_{k-m})+
    A_{k-m} \tilde \beta_m dt. 
\end{align}
For $m=0$, this additive rate model can be written in terms of the observed counting process as $\mathbb{E}\left[
        dN(t)|\oA_{k}, \oL_{k}, D\geq t
    \right] = d\eta_{k0}(t,\oA_{k-1}, \oL_{k})+ A_{k} \tilde \beta_0 dt$. For $m>0$, and similarly to the previous section,
    we first remove the exposure effects $\tilde\beta_0,\cdots, \tilde\beta_{m-1},$ allowing us to express this equation 
    in terms of the observed counting process as
\begin{multline}\label{eq:blippedrate1}
    \mathbb{E}\left[
        dN(t)- \sum_{j=0}^{m-1} A_{i(k-j)}\tilde\beta_j dt ~ \bigg| \oA_{k-m}, \oL_{k-m}, D(\oA_{k-m},0)\geq t
    \right]\\ = 
    d\eta_{km}(t,\oA_{k-m-1}, \oL_{k-m})+ A_{k-m}\tilde\beta_m dt.
\end{multline}    
In general, the potential counting process $dN_{(\oA_{k-m-1},0)}(t)$ is not observable; therefore, when $m > 0,$ one cannot model $\eta_{km} $ using the observed data. We overcome this issue by first modeling the blipped-down counting process. 
    
Specifically, following the same notational convention
that was used to define $\mu_{km}(t)=\mathbb{E}[A_{k-m}|\oA_{k-m-1},\oL_{k-m},D(\oA_{k-m},0) \geq t],$
let $d \rho_{km}(t)$
denote the conditional 
expectation of the left-hand side of \eqref{eq:blippedrate1} given $\{\oA_{k-m-1},\oL_{k-m},D(\oA_{k-m},0)\geq t\}.$ Then,
using \eqref{eq:blippedrate1}, we have
\begin{align} 
    \label{eq:rhokm}
    d\rho_{km}(t)
%    ,  & \oA_{k-m-1}, \oL_{k-m}) 
    %&:=  \mathbb{E}\left\{dN(t)- \sum_{j=0}^{m-1} A_{i(k-j)}\tilde\beta_jdt~ \bigg| \oA_{k-m-1},\oL_{k-m},D(\oA_{k-m},0)\geq t\right\} \\
    %& 
    = d\eta_{km}(t,\oA_{k-m-1}, \oL_{k-m})+ \mu_{km}(t) \tilde\beta_m dt.
\end{align}
Subtracting \eqref{eq:rhokm} from
both sides of \eqref{eq:blippedrate1} then leads
to 
\begin{align}
\label{eq:blippedrate2}
 \mathbb{E}\Bigl[
    dN(t)- \sum_{j=0}^{m-1} & 
    A_{i(k-j)}\tilde\beta_jdt~ 
    \bigg| \oA_{k-m}, \oL_{k-m}, D(\oA_{k-m},0)\geq t
    \Bigr] - d\rho_{km}(t) = \tilde\beta_m \Delta_{km}(t) dt,
    \end{align}
where we recall $\Delta_{km}(t) = 
A_{k-m}-\mu_{km}(t).$
The main advantage of this transformation is that it allows us to model both exposure model and $d\rho_{km}$ using appropriate nonparametric statistical learning methods. In Supplementary Materials Section \ref{app:sim}, we discuss one possible approach to estimate both
$d\rho_{km}(t)$ and $\mu_{km}(t)$. 

%\begin{remark}
To avoid overfitting, and also to relax certain Donsker conditions that are often imposed in deriving asymptotic results for nonparametric models (see \S \ref{sec:asy}), we will leverage $V$-fold crossfitting to estimate
the $\tilde \beta_j$s.  
%and to obtain estimates of 
%$\mu_{km}(t)$ and $d\rho_{km}(t)$. 
Specifically, we first split the data, uniformly at random, into $V$ mutually exclusive and exhaustive sets of size approximately $n V^{-1}$. For the $v^{th}$ fold, denote by $P_{n,v}^0$ the corresponding empirical distribution of the training sample, assumed to be of approximate size $n (1-V^{-1});$ and denote by $P_{n,v}^1$ the empirical distribution of the remaining validation sample, 
which is of approximate size $nV^{-1}.$ Finally, let $\hat \mu_{kmv}(t)$ and $d\hat \rho_{kmv}(t)$ be estimates of $\mu_{km}(t)$ and $d\rho_{km}(t)$ derived from the training sample
corresponding to the $v$th sample split. Then
%\end{remark}
we propose to use (\ref{eq:blippedrate2}) to 
estimate $\tilde \beta_m$ by solving
\begin{multline}\label{eq:robust_Bm_est_eq}
    P_{n,v}^1 \sum_{k=0}^K\int_k^{k+1} 
    Y(t)\hat w_{kmv}(t)
    \hat\Delta_{kmv}(t)\\
    \times\left\{dN(t)- 
        \sum_{j=0}^{m-1}A_{(k-j)}\hat \beta_j dt - d\hat \rho_{kmv}(t)-
        \hat\Delta_{kmv}(t)\beta_m dt
    \right\}=0,
\end{multline}
where $m=0,\cdots,k-1$, $\hat w_{k0v}(t) = 1,$ $\hat\Delta_{kmv}(t)=A_{k-m}-\hat\mu_{kmv}(t)$ and $\sum_{j=0}^{0-1}A_{i(k-j)}\hat \beta_j dt=0$. The nuisance parameter estimate $\hat w_{kmv}(t)$ can be obtained by applying the  method described in Section \ref{sec:parm} to the training sample for the $v$th sample split; however, since the model used
is parametric, one can also use $\hat w_{km}(t)$ as
previously derived from
the full sample.

\subsection{Asymptotic results}
\label{sec:asy}
We consider both parametric and nonparametric estimators for $\mu_{km}(t)$ for $m \geq 0.$ Below and in the proofs, we use empirical process notation $P_n f = n^{-1} \sum_{i=1}^n f(x_i)$ and $P_0 f = E_{P_0} f$. 

\begin{assumption}[Parametric models for $\mu_{km}$] 
\label{ass:parm}
    We impose the following assumptions:
    \begin{itemize}
        \item[(a)]  For each $k,$ suppose  that $\mu_{km}(t)$ is  known to lie in a parametric model $\{\mu_{km\theta_k}:\theta_k \in \Theta_k\}$ (i.e., $\mu_{km}=\mu_{km \tilde\theta_{k}}$ where $\tilde\theta_{k}$ denote the true parameter value). Let $\hat\mu_{km}(t)=\mu_{km \hat\theta_{k}}(t)$ be a corresponding consistent estimator. Also, define \linebreak $\dot\mu_{km \tilde\theta_{k}}(t)= \left.\frac{\partial}{\partial \theta_k}\mu_{km \theta_k}(t)\right|_{\theta_k=\tilde\theta_{k}}$ and assume $\dot\mu_{km \theta_k}(t)$ is continuous and bounded for $\theta_k$ in an open neighborhood of $\tilde \theta_k.$ %$\theta_{k0}$. %
        Finally, for each $k$, assume $\hat\theta_{k}$ is  an asymptotically linear estimator of $\tilde\theta_{k}$ with influence function $\phi_{km 
        \tilde\theta};$ that is, $\hat\theta_{k}-\tilde\theta_{k} = (P_n-P_0)\phi_{km \tilde\theta}+o_p(n^{-1/2})$.
        
        \item[(b)] For each $k$, $w_{km}(t)$ is  known to lie in a parametric model $\{w_{km\alpha_k}:\alpha_k \in \mathcal{A}_k\}$ (i.e., $w_{km}=w_{km \tilde\alpha_k}$ where $\tilde\alpha_{k}$ denote the true parameter value)). Let $\hat w_{km}(t)=w_{km\hat\alpha_{k}}(t)$ be a corresponding consistent estimator. Also, define $\dot w_{km \tilde\alpha_{k}}(t)=\left.\frac{\partial}{\partial \alpha_k}w_{km \alpha_k}(t)\right|_{\alpha_k=\tilde\alpha_{k}}$ and assume $\dot w_{km \alpha_k}(t)$ is continuous and bounded for $\alpha_k$ in an open neighborhood of $\alpha_{k0}$. Finally, for each $k$, assume $\hat\alpha_{k}$ is  an asymptotically linear estimator of $\tilde\alpha_{k}$ with influence function $\phi_{km \tilde\alpha};$ that is, $\hat\alpha_{k}-\tilde\alpha_{k} = (P_n-P_0)\phi_{km\tilde\alpha}+o_p(n^{-1/2})$. 
    \end{itemize}
\end{assumption}

\begin{theorem}[Parametric models for $\mu_{km}$]
\label{thm:parmodels}
  Let  $\mu_{km}(t)$ be a parametric time-varying exposure model for $m \geq 0.$ Under Assumption \ref{ass:parm}.a, the estimator $\hat \beta_0$ defined as a solution to (\ref{eq:B0_est_eq}) is asymptotically linear,
  and 
  \[
  \hat \beta_0 -  \tilde\beta_0 = 
  (P_n-P_0) \left\{ \frac{ \sum_{k=0}^K  \int_{k}^{k+1} Y(t)\left\{A_{k}-\mu_{k0}(t)\right\}dN_{0}(t) - \sum_{k=0}^K \phi_{k0 \tilde\theta} \gamma_{k0} }{P_0 g( \mu)} 
  \right\} + o_p(n^{-1/2}),
  \]
where $\gamma_{k0} = 
P_0 \left[ 
\int_{k}^{k+1}
\dot\mu_{k0 \tilde\theta_{k}}(t)
Y(t) d N_0(t) \right]$, $g(\mu) = \sum_{k=0}^K \int_{k}^{k+1} Y(t)\left\{A_{k}-\mu_{k0}(t)\right\}A_{k}dt$ and $dN_{0}(t) = dN(t)-\tilde\beta_0A_{k}dt$.

Let $w_{km}(t)$ be the weight functions as defined in (\ref{eq:weight_def}) and
let
$\Delta_{km}(t) = 
A_{k-m}-{\mu}_{km}(t)$
for
some $m \geq 1.$ Let $\phi_{\tilde\beta_j}$ be the influence function for $\hat{\beta}_{j}$
for $j \leq m-1;$ i.e., $\hat{\beta}_{j}-\tilde\beta_{j}=(P_n-P_0)\phi_{\tilde\beta_j} + o_p(n^{-1/2})$. Then, under Assumption \ref{ass:parm}, and defining
$dN^{(m)}_0(t)=dN(t)-(\sum_{j=0}^{m-1} A_{k-j}\tilde\beta_{j}+A_{k-m}\tilde\beta_m)dt$ for $m \geq 1,$ we have 
\begin{align*}
\hat{\beta}_m  - \tilde\beta_m
= \frac{1}{P_0 g(\mu,w)} & \times
(P_n-P_0) \sum_{k=0}^{K} \biggl[
 \int_{k}^{k+1}Y(t){w}_{km}(t)
 \Delta_{km}(t)
% \{A_{k-m}-{\mu}_{km}(t)\} 
 dN^{(m)}_0(t) \\
& + 
\{ \phi_{km \tilde \alpha} 
 \, \xi^{(2)}_{km \tilde \alpha}
- \phi_{km \tilde \theta}  \,
\xi^{(1)}_{km \tilde \theta_k} \}
-
 \sum_{j=0}^{m-1} \phi_{\tilde\beta_j} 
    \int_{k}^{k+1}  \gamma_{kmj}(t) dt 
\biggr]
+  o_P(n^{-1/2}),
\end{align*}
where
$\xi^{(1)}_{km \tilde \theta_k} = P_0 \int_{k}^{k+1}Y(t) {w}_{km}(t) \dot\mu_{km\tilde \theta_{}}(t) dN^{(m)}_0(t)$, 
$\xi^{(2)}_{km \tilde \alpha} = P_0 \int_{k}^{k+1}Y(t) \dot w_{km \tilde \alpha}(t)  \Delta_{km}(t)
 %\{A_{k-m}-{\mu}_{km}(t)\} 
 dN^{(m)}_0(t)$, 
 $\gamma_{kjm}(t) =
 P_0  Y(t) w_{km}(t)
 \Delta_{km}(t)
A_{k-j},$
 and
 $g(\mu,w)=\sum_{k=0}^{K}\int_{k}^{k+1} Y(t){w}_{km}(t) \Delta_{km}(t)
 A_{k-m}dt$.
\begin{comment}
    Let $\phi_{kmj \tilde\beta}$ be the influence function for $\hat{\beta}_{k+m-j}$ that is $\hat{\beta}_{k+m-j}-\tilde\beta_{k+m-j}=(P_n-P_0)\phi_{kmj \tilde \beta} + o_p(n^{-1/2})$. Then, under Assumption \ref{ass:parm}, we have
\begin{align*}
    \hat\beta_m - \tilde\beta_m &=
    \{P_0 g(\mu,w)\}^{-1} \times\\
    &\Bigg\{(P_n-P_0)Y(t){w}_{km}(t)\{A_{k}-{\mu}_{km}(t)\} dN_0(t) -(P_n-P_0) \sum_{k=0}^K \phi_{km\tilde\theta} \check\gamma_{k0}\\
    & \hspace{.2in} +(P_n-P_0) \sum_{k=0}^{K-m-1} \phi_{km\tilde\alpha} \breve \gamma_{kmt}dt+(P_n-P_0)  \sum_{j=k+1}^{k+m} \phi_{kmj \tilde \beta} \gamma_{kjm}dt\Bigg\}\\
    & \hspace{3in}+o_p(n^{-1/2}),
\end{align*}
where $S(t;\bar A_{k-1},\bar L_{k}) = p(D(\bar A_k,0)\geq t \mid \bar A_{k-1},\bar L_{k}, D(\bar A_k,0)\geq k),\hspace{0.3in} \check\gamma_{k0} = \linebreak P_0 \int_{k}^{k+1} S(t;\bar A_{k-1},\bar L_{k})\dot\mu_{km\tilde\theta_k}(t) \eta_{0k}dt$, $g(\mu,w)= Y(t){w}_{km}(t) \{A_{k} - \mu_{km}(t) \}A_{k}dt$, \linebreak $\gamma_{kjm} = P_0  Y(t) w_{km}(t)\{A_k-{\mu}_{km}(t)\}A_j $ and $\breve \gamma_{kmt} = P_0 \int_{k+m}^{k+1+m} Y(t) \dot w_{km\tilde\alpha_k}(t) \{A_k-{\mu}_{km}(t)\}\eta_{0k}dt$.
\end{comment}

\end{theorem}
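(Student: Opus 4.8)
The plan is to treat both $\hat\beta_0$ and $\hat\beta_m$ as Z-estimators and to exploit the fact that the estimating equations are \emph{linear} in the target parameter, so that the estimators have an explicit ratio form and the only substantive work is a joint first-order expansion in the finite-dimensional nuisance parameters. I would begin with $\hat\beta_0$. Writing $U_n(\beta_0,\theta)=P_n\sum_{k=0}^K\int_k^{k+1}Y(t)\{A_k-\mu_{k0\theta_k}(t)\}\{dN(t)-A_k\beta_0\,dt\}$, the estimator satisfies $U_n(\hat\beta_0,\hat\theta)=0$ by construction, while the structural model $\mathcal{M}_0$ together with Remark \ref{rem:indep} yields $P_0 U(\tilde\beta_0,\tilde\theta)=0$. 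Indeed, iterating expectations shows that the inner residual $dN(t)-A_k\tilde\beta_0\,dt$ has conditional mean equal to the baseline rate $d\eta_{k0}$, which by Assumption \ref{assump:causal}b does not depend on $A_k$, so that the mean-zero factor $A_k-\mu_{k0}(t)$ forces the whole expectation to vanish; this is precisely the unbiasedness asserted after \eqref{eq:B0_est_eq}.

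The core step is a Taylor expansion of $0=U_n(\hat\beta_0,\hat\theta)$ about $(\tilde\beta_0,\tilde\theta)$. The derivative in $\beta_0$ converges in probability to $-P_0 g(\mu)$ by the law of large numbers and Assumption \ref{ass:parm}.a, which I would take to be nonzero as an identifiability condition; the derivative in each $\theta_k$, evaluated at the truth, converges to $-\gamma_{k0}$ because $\partial_{\theta_k}\{A_k-\mu_{k0\theta_k}(t)\}=-\dot\mu_{k0\theta_k}(t)$. Substituting the assumed asymptotic linearity $\hat\theta_k-\tilde\theta_k=(P_n-P_0)\phi_{k0\tilde\theta}+o_p(n^{-1/2})$ and using $U_n(\tilde\beta_0,\tilde\theta)=(P_n-P_0)U(\tilde\beta_0,\tilde\theta)$ (since $P_0U=0$), I would solve for $\hat\beta_0-\tilde\beta_0$ and recover the stated influence function, with the $-\sum_k\phi_{k0\tilde\theta}\gamma_{k0}$ term appearing exactly as the nuisance-estimation correction. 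Because $\theta$ is finite-dimensional and $\dot\mu$ is bounded and continuous, the Taylor remainder and the cross terms between $\hat\beta_0-\tilde\beta_0$ and $\hat\theta-\tilde\theta$ are $o_p(n^{-1/2})$ without invoking any Donsker condition; consistency of $\hat\beta_0$, needed to initiate the expansion, follows from the explicit ratio form of the estimator and continuity of $\theta\mapsto\mu$.

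For $\hat\beta_m$ with $m\geq 1$ I would run the same argument on the estimating function obtained from \eqref{eq:Bm_est_eq}, now regarded as a function of the enlarged nuisance vector $(\theta,\alpha,\beta_0,\ldots,\beta_{m-1})$. The expansion produces three families of correction terms, each a product of a sensitivity and an influence function: differentiating the weight $w_{km\alpha_k}$ via $\partial_{\alpha_k}w_{km\alpha_k}=\dot w_{km\alpha_k}$ gives the $+\phi_{km\tilde\alpha}\,\xi^{(2)}_{km\tilde\alpha}$ term; differentiating $\mu_{km\theta_k}$ gives $-\phi_{km\tilde\theta}\,\xi^{(1)}_{km\tilde\theta_k}$; and differentiating in each prior $\beta_j$ produces $-\sum_{j=0}^{m-1}\phi_{\tilde\beta_j}\int\gamma_{kjm}(t)\,dt$, since $\partial_{\beta_j}\{\cdots\}=-A_{k-j}\,dt$. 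The plug-in estimators $\hat\beta_j$ enter through their own influence functions, available by induction from the preceding step of the theorem, so the entire scheme is recursive in $m$. Dividing through by $P_0 g(\mu,w)$, which is the negative of the limiting derivative of the estimating function in $\beta_m$, yields the claimed expression with leading term $(P_n-P_0)\sum_k\int Y(t)w_{km}(t)\Delta_{km}(t)\,dN^{(m)}_0(t)$.

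The main obstacle is the careful bookkeeping of the sequential plug-in structure together with the simultaneous estimation of $\theta$, $\alpha$, and the earlier $\beta_j$: one must verify that all pairwise products of estimation errors, for instance between $\hat\mu_{km}-\mu_{km}$ and $\hat\beta_j-\tilde\beta_j$, contribute only at order $o_p(n^{-1/2})$, and that the stochastic integrals $\int Y(t)w_{km}(t)\Delta_{km}(t)\,dN(t)$, being finite sums over events with bounded integrands, possess the finite second moments needed for the central-limit behaviour and for remainder control. Because every nuisance here is parametric with bounded, continuous derivatives, these second-order terms are dominated using the Cauchy--Schwarz inequality and the $\sqrt{n}$-consistency guaranteed by Assumption \ref{ass:parm}, so no empirical-process Donsker machinery is required at this stage; that machinery is reserved for the nonparametric estimators treated separately via cross-fitting.
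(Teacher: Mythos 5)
Your proposal is correct and follows essentially the same route as the paper: the paper likewise exploits the linearity of the estimating equations to write each $\hat\beta_m$ as an explicit ratio, establishes unbiasedness at the truth via Assumption \ref{assump:causal}b and Remark \ref{rem:indep}, linearizes $\hat\mu-\mu$, $\hat w-w$, and $\hat\beta_j-\tilde\beta_j$ through the influence functions $\phi_{km\tilde\theta}$, $\phi_{km\tilde\alpha}$, $\phi_{\tilde\beta_j}$, argues by induction over $m$, and disposes of the cross-product remainders exactly as you describe. Your Z-estimator Taylor-expansion framing is just a repackaging of the paper's direct decomposition (your sensitivity coefficients $-P_0 g$, $-\gamma_{k0}$, $-\xi^{(1)}_{km\tilde\theta_k}$, $+\xi^{(2)}_{km\tilde\alpha}$, $-\int_k^{k+1}\gamma_{kmj}(t)\,dt$ are precisely the terms the paper isolates), so there is no substantive difference.
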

We note that the first part of 
Theorem \ref{thm:parmodels}
provides the stated influence
function for $\hat \beta_0,$
namely $\phi_{\tilde\beta_0}.$
This influence function can be
substituted into the influence
function for
$\hat{\beta}_m  - \tilde\beta_m$
above when $m =1;$ along with 
$\phi_{\tilde\beta_0},$
the resulting influence
function, or $\phi_{\tilde\beta_1},$
can then be
substituted into the influence
function given 
for $\hat{\beta}_m  - \tilde\beta_m$
when $m =2;$ and, so on.

\begin{theorem}[Nonparametric models for $\mu_{km}$ and $d\rho_{km}$] 
\label{thm:nonparmodels}
  Let  $\mu_{k0}(t)$ be a time-varying exposure model,  $w_{km}(t)$ be the weight functions as defined in (\ref{eq:weight_def}) and $d\rho_{km}$ be the conditional mean function defined in (\ref{eq:rhokm}). Under Assumptions \ref{ass:parm}.b and \ref{ass:nonparm-new} (see the appendix), the estimator $\hat \beta_0$ defined as a solution to (\ref{eq:robust_Bm_est_eq}) is asymptotically linear:
\begin{comment}
    with the following influence function:
\begin{align*}
    \hat \beta_0-\tilde\beta_0 =\frac{1}{V}
        \sum_{v=1}^V 
    P_{n,v}^1 \left[ \frac{\sum_{k=0}^K \int_{k}^{k+1} Y(t) \Delta_{k0}(t) 
\{dN(t) - d\rho_{k0}(t)-\Delta_{k0}(t)\tilde\beta_0 dt\}}{
P_0 g(\mu)} \right] 
+ o_P(n^{-1/2}).
\end{align*}
Equivalently,
\end{comment}
\begin{align*}
    \hat \beta_0-\tilde\beta_0
    =
(P_{n} - P_0)
\left[
\frac{\sum_{k=0}^K \int_{k}^{k+1} Y(t) \Delta_{k0}(t) 
\{dN(t) - d\rho_{k0}(t)-\Delta_{k0}(t)\tilde\beta_0 dt\}}{P_0 g(\mu)}
\right]
+ o_P(n^{-1/2}).
\end{align*}
Let $w_{km}(t)$ be the weight functions as defined in (\ref{eq:weight_def}) and
let
$\Delta_{km}(t) = 
A_{k-m}-{\mu}_{km}(t)$
for
some $m \geq 1.$ Let $\phi^{\star}_{\tilde\beta_j}$ be the influence function for $\hat{\beta}_{j}$
for $j \leq m-1;$ i.e., $\hat{\beta}_{j}-\tilde\beta_{j}=(P_n-P_0)\phi^{\star}_{\tilde\beta_j} + o_p(n^{-1/2})$. Then, 
%under Assumption \ref{ass:parm}, and 
defining
$dM^{(m)}_0(t)=dN(t)- d \rho_{km}(t) -(\sum_{j=0}^{m-1} A_{k-j}\tilde\beta_{j}+\Delta_{km}(t)\tilde\beta_m)dt$ for $m \geq 1,$ we have 
\begin{align*}
\hat{\beta}_m  - \tilde\beta_m
= \frac{1}{P_0 g(\mu,w)} & \times
(P_n-P_0) \sum_{k=0}^{K} \biggl[
 \int_{k}^{k+1}Y(t){w}_{km}(t)
 \Delta_{km}(t)
% \{A_{k-m}-{\mu}_{km}(t)\} 
 dM^{(m)}_0(t) \\
& + 
\phi_{km \tilde \alpha} 
 \, \xi^{(3)}_{km \tilde \alpha}
-
 \sum_{j=0}^{m-1} \phi^{\star}_{\tilde\beta_j} 
    \int_{k}^{k+1}  \gamma_{kmj}(t) dt 
\biggr]
+  o_P(n^{-1/2}),
\end{align*}
where
$\xi^{(3)}_{km\tilde\alpha}  =
P_0
\int_{k}^{k+1} 
Y(t) \dot w_{km \tilde\alpha}(t)  \Delta_{km}(t) dM^{(m)}_0(t),$
 $\gamma_{kjm}(t) =
 P_0  Y(t) w_{km}(t)
 \Delta_{km}(t)
A_{k-j},$

 and
 \[
 g(\mu,w)=\sum_{k=0}^{K}\int_{k}^{k+1} Y(t){w}_{km}(t) \Delta_{km}(t)
 A_{k-m}dt.
 \]
\end{theorem}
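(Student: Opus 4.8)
The plan is to read \eqref{eq:robust_Bm_est_eq} as a cross-fitted $Z$-estimating equation that is linear in the target $\beta_m$, solve it as a ratio, and control the numerator by combining the Neyman orthogonality that the Robinson-type transformation \eqref{eq:blippedrate2} confers on the nuisances $\mu_{km}$ and $d\rho_{km}$ with the cross-fitting device. Because $\hat\beta_m$ is defined sequentially from $\hat\beta_0,\ldots,\hat\beta_{m-1}$, I would argue by induction on $m$, taking the asymptotic linearity of the earlier stages, with influence functions $\phi^\star_{\tilde\beta_j}$, as the induction hypothesis. The engine of the argument is that the score evaluated at the truth, $Y(t)w_{km}(t)\Delta_{km}(t)\,dM^{(m)}_0(t)$, factors through the residualized treatment $\Delta_{km}(t)=A_{k-m}-\mu_{km}(t)$ and the residualized (blipped) outcome increment $dM^{(m)}_0(t)$, each of which is conditionally mean zero given the information that renders $\mu_{km}$ and $d\rho_{km}$ measurable; consequently the Gateaux derivative of the population estimating function in the directions of $\mu_{km}$ and $d\rho_{km}$ vanishes. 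This is precisely why, in contrast to Theorem \ref{thm:parmodels}, no $\dot\mu$ or $\dot\rho$ terms survive. The weight $w_{km}(t)$ behaves differently: it depends on the intervening exposures $A_{k},\ldots,A_{k-m+1}$, which lie outside that conditioning information and encode the reweighting to the counterfactual at-risk set, so the score is not orthogonal in $w$ and the parametric error in $\hat\alpha_k$ must be tracked.

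I would treat $m=0$ first, where $\hat w_{k0v}\equiv1$ and the prior-$\beta$ sum is empty. Solving \eqref{eq:robust_Bm_est_eq} gives $\hat\beta_0-\tilde\beta_0$ as a ratio whose denominator converges to $P_0 g(\mu)$ by the law of large numbers and the consistency of $\hat\mu_{k0v},\hat\rho_{k0v}$. For the numerator I would, within each fold $v$, add and subtract the score evaluated at the true nuisances and split the difference into (i) a centered empirical-process term over the validation sample and (ii) a population drift term. Conditionally on the training fold, $\hat\mu_{k0v}$ and $\hat\rho_{k0v}$ are fixed functions, so (i) is a mean-zero average of a fixed integrand whose conditional variance is bounded by $\|\hat\mu_{k0v}-\mu_{k0}\|^2+\|\hat\rho_{k0v}-\rho_{k0}\|^2$; consistency then forces (i) to be $o_P(n^{-1/2})$ without any Donsker condition. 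For (ii), orthogonality annihilates the first-order part, leaving a second-order remainder dominated by the product $\|\hat\mu_{k0v}-\mu_{k0}\|\,\|\hat\rho_{k0v}-\rho_{k0}\|$, which is $o_P(n^{-1/2})$ under the product-rate condition of Assumption \ref{ass:nonparm-new}. Averaging over folds collapses $V^{-1}\sum_v P^1_{n,v}$ to $P_n$ and delivers the stated influence function for $\hat\beta_0$.

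For $m\ge1$ I would run the same ratio argument, now with bread $P_0 g(\mu,w)$, and decompose the numerator into four pieces. (a) The oracle term $(P_n-P_0)\sum_k\int_k^{k+1}Y(t)w_{km}(t)\Delta_{km}(t)\,dM^{(m)}_0(t)$, obtained by replacing every estimated quantity with its truth. (b) The weight contribution: linearizing $\hat w_{km}$ through $\hat\alpha_k-\tilde\alpha_k=(P_n-P_0)\phi_{km\tilde\alpha}+o_P(n^{-1/2})$ from Assumption \ref{ass:parm}.b and replacing the induced derivative by its expectation $\xi^{(3)}_{km\tilde\alpha}=P_0\int_k^{k+1}Y(t)\dot w_{km\tilde\alpha}(t)\Delta_{km}(t)\,dM^{(m)}_0(t)$ yields $\phi_{km\tilde\alpha}\,\xi^{(3)}_{km\tilde\alpha}$. (c) The plug-in of the previously estimated $\hat\beta_j$, $j\le m-1$: using the induction hypothesis $\hat\beta_j-\tilde\beta_j=(P_n-P_0)\phi^\star_{\tilde\beta_j}+o_P(n^{-1/2})$ together with $\gamma_{kmj}(t)=P_0\,Y(t)w_{km}(t)\Delta_{km}(t)A_{k-j}$ produces $-\sum_{j=0}^{m-1}\phi^\star_{\tilde\beta_j}\int_k^{k+1}\gamma_{kmj}(t)\,dt$. (d) The $\mu_{km},d\rho_{km}$ contributions, which vanish to first order by orthogonality and are $o_P(n^{-1/2})$ by the cross-fitting-plus-rate argument used for $m=0$. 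Collecting (a)--(d) and dividing by $P_0 g(\mu,w)$ gives the displayed influence function.

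The main obstacle is the uniform, product-rate control of the nonparametric remainders in the presence of the non-orthogonal, future-dependent weight. Two subtleties demand care. First, the cross-fitted empirical-process term must be shown to be $o_P(n^{-1/2})$ purely from the $L_2$-consistency of $\hat\mu_v$ and $\hat\rho_v$; cross-fitting is exactly what permits the conditional-variance (conditional Chebyshev) bound that replaces a Donsker assumption. Second, one must verify that the second-order drift is genuinely a product of nuisance errors, which requires not only orthogonality in $\mu$ and $\rho$ but also that the cross-terms mixing the $\hat\alpha$ error with the $\hat\mu,\hat\rho$ errors are themselves second-order and hence negligible; this is where the interaction between the reweighting and the Robinson residual must be handled explicitly. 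Finally, assembling the induction so that the nonparametric errors of the earlier stages do not propagate beyond the stated $\phi^\star_{\tilde\beta_j}$ contributions is the remaining bookkeeping burden.
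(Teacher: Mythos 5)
Your proposal is correct and follows essentially the same route as the paper: a ratio representation, an induction over $m$, a decomposition of the numerator into an oracle term, single-nuisance-error terms killed by conditional mean-zero (orthogonality) plus a cross-fitting Chebyshev/variance argument (the paper's Lemma \ref{lem:help2cvg}), product-of-errors remainders controlled by the product-rate conditions in Assumption \ref{ass:nonparm-new} via Markov's inequality, and explicit linearizations only for the parametric weight $\hat w_{km}$ and the plugged-in $\hat\beta_j$'s. The only difference is presentational — you phrase the first-order cancellation as Neyman orthogonality/Gateaux derivatives where the paper writes out the six concrete terms $(I)$–$(VI)$ — but the mechanism and the resulting influence function are identical.
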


\subsection{Variance of $\beta_m$}
We use a bootstrap approach to estimate the variance of the short-term and delayed effects, $\beta_m$. In particular, we re-sample individuals with replacement, retaining their entire observational timeline. Next, for each bootstrap sample, $r=1,\ldots,R$ we estimate all components of the model \{$\hat \mu_{km}^{(r)}, \hat \alpha_m^{(r)}, d\hat \rho_{km}^{(r)}, \hat \beta_m^{(r)}$\} and calculate $\text{Var}\{\hat \beta_m\} =R^{-1} \sum_{r=1}^R (\hat \beta_m^{(r)}-\overline{\beta}_m)^2$, where $\overline{\beta}_m=R^{-1}\sum_{r=1}^R\hat \beta_m^{(r)}$ is the bootstrap mean.

\subsection{Non-administrative censoring}\label{sec:censor}
We consider non-administrative censoring at some time $C<\tau$. %In our data, this occurs whenever an individual leaves Medicare FFS for reasons other than death. For example, an individual may enroll in the Medicare Advantage program---a private Medicare insurance supplement---in this case any hospitalization records would no longer be captured in our data as the billing would take place with the private insurer. We note that in our data (Table \ref{tab:cohort_char}) non-administrative censoring occurs for 37,964 individuals (12.7\% of the sample) at the average age of 66.7 (IQR 66.1-67.3).
Let $N_C(t) = I(C \leq t)$ denote the censoring counting process with $\lambda_c(t,\bar A_{\floor{X}},\bar L_{\floor{X}},X )=E\{dN_C(t) \mid C \geq t, \bar A_{\floor{X}},\bar L_{\floor{X}},X >t,X \}$ where $X = D \wedge C$.   Under two conditions our original model holds when non-administrative censoring occurs \citep{Vansteelandt2016RevisitingConfounding}. First, it is assumed that future exposures do not impact the censoring process (i.e., $\lambda_c(t,\bar A_{\floor{X}},\bar L_{\floor{X}},X )= \lambda_c(t,\bar A_{\floor{t}},\bar L_{\floor{t}} )$). Second, current and past exposures, $\oA_k$, are assumed to be uncorrelated with the censoring process (i.e., $\lambda_c(t,\bar A_{\floor{t}},\bar L_{\floor{t}} )= \lambda_c(t, L_0 )$). The second assumption can be relaxed by replacing $w_{km}(t)$ with $w_{km}(t)\times w_{km}^{C}(t)$, where \linebreak $w_{km}^{C}(t) = \exp\left\{\int_k^t \lambda_c(t,\bar A_{\floor{t}-m},\bar L_{\floor{t}-m} ) dt\right\}$ is the inverse probability of remaining uncensored at time $t$.

%%%%%%%%%%%%%%%% Simulation %%%%%%%%%%%%%%%% 
\section{Simulation}
To understand the finite sample performance and limitations of the proposed methods, we developed a simulation study to estimate the short-term and delayed effects of time-varying exposures on a recurrent event process. We demonstrate empirical evidence that our estimation procedures are root-$n$ consistent and that bootstrap standard errors produce valid confidence intervals with nominal coverage of the true exposure effects. Finally, we explore the extent to which misspecification of the parametric model introduces bias into our estimates.

Our simulation study considers the effect of continuous time-varying exposures on recurrent events with a correlated death process. For each observation, $i$, we sample a time-series of two covariates, $\{L_{i1k}, L_{i2k}\}$, $k=-4,\ldots,30$ where $k<0$ provides a measure of baseline factors potentially related to future events. The covariates $\oL_{iK}$ are generated as multivariate normal with $\mathbb{E}(L_{ijk})=0$, Cov$(L_{ijk},L_{ij'k'})=\mathbb{I}_{(j=j')}\sigma_j^2\Sigma(k,k')$, with $\sigma_1=0.2$, $\sigma_2=1$, and $\Sigma(k,k')=0.95^{|k-k'|}$. We also sample a fixed frailty variable for each observation, $Q$, from an exponential distribution with $\mathbb{E}(Q)=0.2$. Continuous exposures are generated for two distinct simulation scenarios: first by a linear function of covariates with $A_{ik}\sim\mathcal{N}(L_{i1k}+0.5 L_{i2k},1)$, and second by a complex function of covariates $A_{ik}\sim\mathcal{N}(2L_{i1k}^2+2L_{i1k}\cdot|L_{i2k}-1|,1)$; exposures are normalized such that $A_{k}\in[0,1]$.

We generate recurrent event $j=1,2,\ldots$ at time $T_j$ by drawing $U\sim\text{Uniform}(0,1)$ and solving for $t$ the conditional survivor function, $U=\mathbb{P}(T_j>t|T_{j-1},\ldots,T_0=0)=\exp\{-\int_{T_{j-1}}^t \lambda(v) dv\}$,
where $\lambda(v)$ is the event rate function and $T_0\equiv 0$. For $t\in[k,k+1)$ we define the event rate as an additive function of exposures plus a baseline intensity of the individual-level covariates, $\lambda(t) = Y(t)\left\{\sum_{m=0}^4 A_{k-m} \tilde \beta_m + c\cdot \eta(\oL_k)dt \right\}$,
where $\tilde\beta_0=0.1, \tilde\beta_1=0.05, \tilde\beta_2=0.025$, and $\tilde\beta_3=\tilde\beta_4=0$ define the exposure effects, $Y(t)$ is a survival indicator, $\eta(\oL_k)dt= Q\cdot \exp\{L_{1k}+L_{1k}^2+L_{2k}-1\}$ is the baseline intensity, and $c$ is set such that $\text{Var}\{c\cdot\eta(\oL_k)dt\}/ \text{Var}(\sum_{m=0}^4 A_{k-m}\tilde\beta_m)=100$. By scaling the baseline intensity by $c$ we ensure that the confounding bias is large relative to the exposure effect, which represents the more realistic setting found in our data analysis. We generate a terminal event $D$ from the rate function $\tilde{\lambda}(t)=0.02A_{k}+0.01A_{k-1}+Q\exp\{L_{1k}+L_{2k}-1\}$ and non-administrative censoring $C$ at rate $\tilde{\lambda}^C(t)=0.2\exp\{L_{1k}+L_{2k}-1\}$.

We simulate with $n=\{2000,5000\}$ observations and 100 replicates for both linear and complex exposure scenarios. To estimate the parameters $\tilde {\boldsymbol\beta}$ we compare three approaches, two only estimating the exposure model and one with the proposed robust estimator. The exposure model, $\mathbb{E}[A_{k-m}|\oA_{k-m-1},\oL_{k-m}, D(\oA_{k-m},0)\geq t]$, is estimated by two different approaches: 1) a parametric model or 2) a nonparametric approach. For the robust estimator, we fit the exposure model and nuisance parameter $d\rho_{km}(t)$ using the SuperLearner ensemble modeling approach, combining results from both a parametric model and a machine learning gradient boosting algorithm \citep{Ke2017LightGBM:Tree} fit with several combinations of hyperparameters. Specifically, we fit a separate model for each combination of $k$ and $m$ and allowed for time-varying  exposure using the pseudo-data approach with 5 equally spaced time bins for each period $[l,l+1)$, $l=0,\cdots,K$ (see Supplementary Materials Section \ref{app:sim}). All parameters ($\mu_{km}(t)$ and $d\rho_{km}(t)$) are estimated using V-fold crossfitting where the sample is split into five distinct subsets and the estimates of each subset for $\mu_{km}(t)$ or $d\rho_{km}(t)$ are based on a fit of the model to the remaining subsets. The competing risk parameters, $\alpha_m$, are always fit using
the previously described parametric approach. To our knowledge, no other methods for recurrent events exist that are comparable to our proposed approach for the described data generating mechanisms. Supplementary Materials Section \ref{app:sim} provides additional details on the estimation of $\mu_{km}(t)$ and $d\rho_{km}(t)$. 

We compare the models by $\sqrt{n}$-bias, standard error based on 200 bootstrap estimates, and coverage of the true parameters by 95\% confidence intervals. Results for the simple exposure model are given in Table \ref{tab:sim_scen1}. Under the simple exposure model scenario, all methods have low $\sqrt{n}$-bias, which is similar or smaller with increased sample size validating the asymptotic theory for consistency of our estimators. Compared to the parametric exposure model, estimator standard errors are slightly larger using a nonparametric exposure model approach; the standard errors are generally smallest for the robust estimator with nonparametric exposure model and nuisance parameter estimators. All methods cover the true parameter values with rates reasonably close to the nominal 95\% coverage rate although the robust estimator coverage appears slightly lower at later lags.

\begin{table}[!ht]
\centering\scriptsize
\setstretch{1.2}
\caption{Simulation results.}
\label{tab:sim_scen1}
\begin{tabular}[t]{rrrrrrcrrrrr}
\toprule[2pt]
&\multicolumn{5}{c}{$n=2000$}&&\multicolumn{5}{c}{$n=5000$}\\
\cmidrule{2-6}\cmidrule{8-12}
\multicolumn{1}{l}{\textbf{Simple exposure model}}&\multicolumn{1}{c}{$\beta_0$} &\multicolumn{1}{c}{$\beta_1$} &\multicolumn{1}{c}{$\beta_2$}  
&\multicolumn{1}{c}{$\beta_3$} &\multicolumn{1}{c}{$\beta_4$} & &\multicolumn{1}{c}{$\beta_0$} &\multicolumn{1}{c}{$\beta_1$} &\multicolumn{1}{c}{$\beta_2$} &\multicolumn{1}{c}{$\beta_3$} &\multicolumn{1}{c}{$\beta_4$} \\
\midrule
\multicolumn{5}{l}{\textit{$\sqrt{n}$-Bias}}\\
Parametric &  0.03 & -0.21 & -0.16 & -0.16 & -0.06 && -0.13 & -0.21 & 0.04 & -0.31 & -0.10\\
Nonparametric &  0.23 & -0.19 & -0.07 & -0.22 & -0.09 &&  0.00 & -0.14 & 0.02 & -0.34 & -0.10\\
Nonparametric Robust & -0.34 & -0.44 & -0.21 & -0.05 &  0.04 && -0.59 & -0.34 & -0.11 & -0.18 &  0.00\\

\addlinespace
\multicolumn{5}{l}{\textit{Standard Error} ($\times100$)}\\
Parametric & 2.49 & 2.52 & 2.57 & 2.57 & 2.58 && 1.60 & 1.64 & 1.65 & 1.65 & 1.66\\
Nonparametric & 2.67 & 2.68 & 2.68 & 2.70 & 2.66 && 1.62 & 1.66 & 1.66 & 1.68 & 1.68\\
Nonparametric Robust & 2.60 & 2.31 & 2.16 & 2.06 & 1.90 && 1.49 & 1.43 & 1.38 & 1.33 & 1.30\\

\addlinespace
\multicolumn{5}{l}{\textit{Coverage}}\\
Parametric & 0.97 & 0.93 & 0.95 & 0.95 & 0.95 && 0.94 & 0.98 & 0.91 & 0.93 & 0.96\\
Nonparametric & 0.92 & 0.91 & 0.90 & 0.96 & 0.96 && 0.93 & 0.98 & 0.92 & 0.94 & 0.93\\
Nonparametric Robust & 0.94 & 0.91 & 0.85 & 0.87 & 0.92 && 0.91 & 0.90 & 0.87 & 0.85 & 0.93\\

\addlinespace
\multicolumn{1}{l}{\textbf{Complex exposure model}}\\
\midrule
\multicolumn{5}{l}{\textit{$\sqrt{n}$-Bias}}\\
Parametric & -0.69 & -0.36 & -0.17 & -0.19 &  0.15 && -0.85 & -0.45 & -0.55 &  0.08 & 0.22\\
Nonparametric & -0.21 & -0.17 & -0.04 & -0.23 &  0.05 && -0.10 & -0.07 & -0.38 & -0.01 & 0.10\\
Nonparametric Robust & -0.63 & -0.41 & -0.20 & -0.14 & 0.06 && -0.49 & -0.28 & -0.37 &  0.08 & 0.14\\

\addlinespace
\multicolumn{5}{l}{\textit{Standard Error} ($\times100$)}\\
Parametric & 2.78 & 2.79 & 2.73 & 2.71 & 2.71 && 1.81 & 1.80 & 1.81 & 1.80 & 1.78\\
Nonparametric & 3.09 & 3.09 & 3.06 & 3.03 & 3.04 && 1.89 & 1.91 & 1.94 & 1.96 & 1.96\\
Nonparametric Robust & 2.98 & 2.72 & 2.50 & 2.31 & 2.20 && 1.75 & 1.67 & 1.63 & 1.57 & 1.52\\

\addlinespace
\multicolumn{5}{l}{\textit{Coverage}}\\
Parametric & 0.91 & 0.93 & 0.96 & 0.93 & 0.94 && 0.86 & 0.93 & 0.91 & 0.97 & 0.97\\
Nonparametric & 0.92 & 0.90 & 0.96 & 0.92 & 0.89 && 0.95 & 0.97 & 0.94 & 0.99 & 0.96\\
Nonparametric Robust & 0.91 & 0.93 & 0.92 & 0.91 & 0.87 && 0.92 & 0.92 & 0.92 & 0.96 & 0.94\\

\bottomrule[2pt]
\end{tabular}
\end{table}

Under the complex exposure model scenario (Table \ref{tab:sim_scen1}), the parametric approach tends to have larger bias across all parameters which appears to increase slightly with $n$. The other estimators show evidence of consistency with similar or smaller bias at larger $n$. We find that an exposure model only approach with nonparametric estimator has the largest standard errors while the robust approach generally has the smallest standard errors, similar to the first simulation scenario. Coverage of all models are near or slightly below the nominal level of $0.95$. Additional simulation results regarding power and type I error rate are given in Supplementary Materials Section \ref{app:sim}. Power for the robust model is similar to other approaches when $n=2000$ and uniformly highest when $n=5000$. Type I errors are near the expected rate of 0.05 in the $n=5000$ setting and occasionally higher for the smaller sample size setting.

%%%%%%%%%%%%%%%% Data Analysis %%%%%%%%%%%%%%%% 
\section{Effect of time-varying PM$_{2.5}$ exposure on recurrent CVD hospitalizations}

We apply our proposed method to estimate the effects of exposures to time-varying (monthly) PM$_{2.5}$ on recurrent CVD hospitalizations. Our longitudinal data set is based on 10,438,899 person-months of observation representing a 10\% sample ($n= 299,661$) of Medicare Fee-for-Service (FFS) beneficiaries living in the Northeast region of the United States (see Supplementary Materials Section \ref{app:data} for additional details). We considered a baseline period of $M=6$ months following each individual's entry into Medicare FFS (at age 65) and administrative censoring after $\tau=30$ months of follow-up. During follow-up, we recorded 7,814 primary-cause CVD hospitalizations, including diagnoses of myocardial infarction, atrial fibrillation, or cardiac arrest. In addition, our analysis considered competing risks for 11,565 individuals who died during follow-up. Although non administrative censoring was present (12.7\% of the sample at an average age of 66.7, IQR 66.1-67.3) due to leaving Medicare FFS or missing covariates, we do not believe that this was related to exposure to PM$_{2.5}$.

Following our simulation study, we used three estimators to estimate the short-term and delayed impact that PM$_{2.5}$ has on the rate of CVD hospitalization. Specifically, we estimated $\tilde\beta_m$ using two versions of the exposure model estimator (Parametric and Nonparametric) as well as the robust estimator (Nonparametric Robust). As covariates in the exposure model, we included ZIP code characteristics from the US census (see Supplementary Materials Section \ref{app:data}), centroid latitude and longitude, average monthly maximum temperature and minimum relative humidity, average monthly NO$_2$ and ozone concentrations, and indicators for state, month, and year. For the estimation of $d\rho_{km}(t)$, $t\in[k,k+1)$, we also included individual-level covariates: race, sex, Medicaid eligibility, and the number of CVD events during the period $[0,k)$. The confidence intervals are based on standard errors derived from 100 bootstrap samples. See Supplementary Materials Section \ref{app:data} for complete details on the fitting algorithms and the specification of the parametric model.

Estimates $\hat \beta_m$ for each of the three modeling approaches are given in Figure \ref{fig:alpha_beta_est}. Although the estimators are overall consistent in their results, we note several differences. First, the parametric estimator shows less smoothness across the lag periods defined by $\beta_m$. Second, the nonparametric exposure-model-only and nonparametric robust estimators have a lower variance compared to the parametric estimator. Third, when comparing the cumulative effects (i.e., $\sum_{m=0}^\ell\hat \beta_m$ for given lag $\ell$), we note an overall trend of increasing CVD hospitalization rates due to longer-term increases in PM$_{2.5}$ exposure. Finally, the effects of the nonparametric estimator are slightly larger compared to the other two estimators.

\begin{figure}[!ht]
    \centering
    \subcaptionbox{Effects of a 10$\mu g/m^3$ increase in monthly average PM$_{2.5}$ exposure on the change in rate of CVD hospitalizations (y-axis) per 100,000 Medicare beneficiaries. Columns describe the estimators used. The top row shows the individual $\hat \beta_m$ parameters for lag $m$ (x-axis). The bottom row describes the cumulative effect (i.e., $\sum_{m=0}^\ell \hat \beta_m$ where $\ell$ is the length of lagged exposure increases) for a consecutive increase in exposure beginning in the immediate month ($x=0$). Shaded area represents bootstrap 95\% confidence intervals.\label{fig:alpha_beta_est}}{\includegraphics[width=0.8\textwidth]{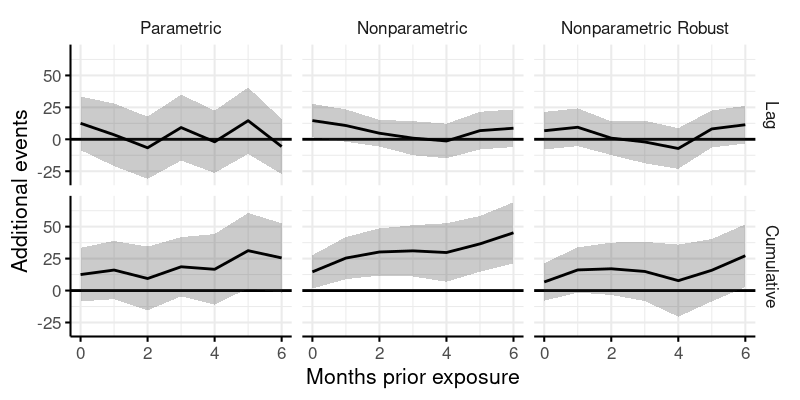}}
    \subcaptionbox{Estimated decrease in primary cause CVD hospitalizations (95\% confidence interval in shaded area) over 30 months if maximum PM$_{2.5}$ were limited to 12 or 9 $\mu g/m^3$.\label{fig:add_events}}{\includegraphics[width=0.8\textwidth]{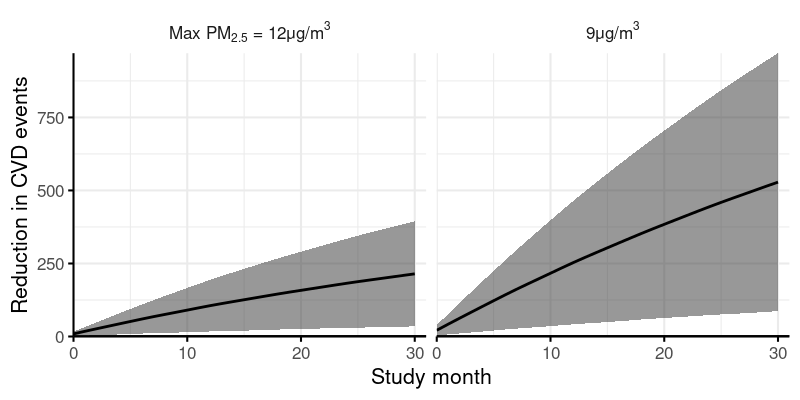}}
    \caption{Effects of PM$_{2.5}$ exposure on CVD hospitalizations.}
\end{figure}

Using the robust nonparametric estimator as the standard for inference, we estimate that a 10-$\mu g/m^3$ increase in monthly average PM$_{2.5}$ concentration over the current and six prior months would correspond to an additional 27.3 (95\% CI: 2.7-51.9) CVD-related hospitalizations per 100,000 Medicare beneficiaries. We note that the greatest effects of PM$_{2.5}$ on CVD hospitalization rates are due to same-month and 1 month prior exposure, as well as exposures from 5-6 months prior. Although individual lag effects, $\hat \beta_m$, are not significantly different from null, there is evidence that continued exposure over the course of several months (i.e., $\sum_{m=0}^6\hat \beta_m$) corresponds to a significant increase in the rate of CVD hospitalizations. Additional results of the analysis and data limitations are given in the Supplementary Materials \ref{app:data}.

\subsection{Additional events incurred}
Based on estimates $\hat{\beta}_m$, $m=0,\ldots,M$ we propose an estimate for the number of events incurred due to an intervention in the maximum allowed PM$_{2.5}$ exposure among our sample. Let $N_{i(a^*)}(t)$ indicate the counterfactual number of events for individual $i$ at time $t\in[k,k+1)$ under exposure history $a^*=\{A_{-M}^*,\ldots,A_{k}^*\}$ where $A_{k'}^*=A_{k'}\wedge a$ is the minimum of the observed exposure or a counterfactual exposure $a$. For $t\in[k,k+1)$ we calculate,
\begin{multline}\label{eq:add_events}
    \sum_{i=1}^n\hat{\mathbb{E}}\{N_i(t)-N_{i(a^*)}(t)|D_i\geq t\} =
    \sum_{i=1}^n\sum_{k=0}^{\floor{t}} \int_k^{(k+1)\wedge t} Y_i(t)\left\{\sum_{m=0}^M (A_{i(k-m)}- A_{i(k-m)}^*)\hat{\beta}_mdt\right\}.
\end{multline}

The confidence intervals in the estimates are obtained using the bootstrap estimates of $\hat \beta_m$. Note that \eqref{eq:add_events} only considers changes in the number of events prior to observed death or censorship of the individual $i$. It is reasonable to assume that when $A_k$ also has an effect on mortality, additional events may be observed given a longer lifetime; however, accomplishing this requires consistent estimation of the nuisance function $d\eta_{km}(t,\oA_{k-m-1},\oL_{k-m})$.

Figure \ref{fig:add_events} shows the estimated decrease in events based on the robust estimator under counterfactual exposure scenarios where the monthly average exposure is limited to 12 or 9 $\mu g/m^3$ PM$_{2.5}$ (the most recent as well as newly updated EPA National Ambient Air Quality Standards of annual average exposure for PM$_{2.5}$). Throughout the course of the 30 month study period and assuming that each individual survived and was uncensored for the same period of time regardless of exposure, we estimate that limiting exposure to 12 or 9 $\mu g/m^3$ PM$_{2.5}$ would have resulted in 214 (95\% CI 35-394) or 528 (95\% CI 87-970) fewer primary CVD hospitalizations, respectively. With respect to the total number of hospitalizations for CVD observed, we estimate that limiting monthly average exposure to $9\mu g/m^3$ PM$_{2.5}$ over the same time period would have resulted in a 6.8\% decrease in the total number of observed events during this time frame. Extrapolating to the entire Northeast Medicare FFS population would imply that limiting monthly average exposure to 9$\mu g/m^3$ would have eliminated 5,280 primary-cause CVD hospitalizations for individuals between ages 65.5 and 68. A recent study found that the average total hospitalization costs for CVD were \$13,100 for atrial fibrillation, \$29,500 for myocardial infarction, and \$24,400 for cardiac arrest \citep{tajeu2024cost}. Hence, a conservative estimate, using the average cost of hospitalization for atrial fibrillation, implies that limiting monthly average PM$_{2.5}$ to a maximum of 9$\mu g/m^3$ over a 10-year period would reduce associated healthcare costs for Medicare FFS beneficiaries in the Northeast US by \$277 million.

From a regulatory perspective, such as the PM$_{2.5}$ Regulatory Impact Analysis from the Environmental Protection Agency, the focus on a single health endpoint can lead to underestimates of the total health and welfare benefits resulting from lower air pollution standards \citep{U.S.EnvironmentalProtectionAgency2022RegulatoryMatter}. Some studies have estimated the association between long-term exposure to air pollution and CVD progression. These studies have considered single hospital readmission as the outcome and focused on exposure measured at a single time point \citep{Koton2013CumulativeOutcomes, Zhang2023AssociationsDiseases} or exposure that varied over time without considering delayed effects \citep{Zanobetti2007ParticulateInfarction, Liu2018FineCities, VonKlot2005AmbientCities}.  Our framework allows a more holistic understanding of the cumulative effects of time-varying exposure while also accounting for competing risks of disease-related death. Our findings underscore the significant impact of fine particulate matter exposure on recurrent cardiovascular hospitalizations, providing critical evidence for policymakers and public health officials aiming to mitigate air pollution’s adverse effects on vulnerable populations.

%%%%%%%%%%%%%%%% Discussion %%%%%%%%%%%%%%%% 
\section{Discussion}\label{sec:discussion}
Our SNCURE framework allows us to estimate the effects of time-varying exposures on recurrent event outcomes in the presence of a correlated terminal event. We develop two estimators: a parametric method that only requires estimation of an exposure model, and a robust method that relaxes parametric assumptions but also requires estimation of a nuisance function. Our robust estimator allows for the application of most nonparametric statistical learning methods to model both exposure and nuisance functions. This flexible approach alleviates the unrealistic requirement to correctly specify these parametric models. For both estimators, we demonstrate, through rigorous proof and empirical simulation, they consistent and asymptotically normal. Finally, we provide the R software package \textit{sncure} to apply our methods in a real data scenario.

%In our our data simulation under a simple (i.e., parametric) exposure model, we found the parametric estimator had minimal bias and adequate coverage while the robust estimator had slightly larger bias with lower variance at the cost of below nominal coverage of the true exposure effects. These differences in the robust estimator may be due to over-reliance on the nonparametric components of the ensemble modeling approach, when a parametric estimator is sufficient. The complex (i.e., nonparametric) exposure model simulation scenario saw larger bias and variance for the parametric estimator; the robust estimator showed lower variance with nominal coverage in this setting. Considering a nonparametric estimator for the exposure model without also modeling the nuisance parameters resulted in increased variance across all settings. The robust estimator had higher power in larger $n$ settings, however in the simple exposure model scenario it displayed a larger type I error rate, likely corresponding with below nominal coverage.

%The robust estimator was observed to be more efficient than the parametric approach in settings with larger $n$. However, when the exposure model was correctly specified by a parametric model, the robust estimator confidence intervals were too narrow. Modeling the exposure model only with a nonparametric approach was less efficient than parametric or robust estimating procedures.

Our data application to a longitudinal cohort of Medicare beneficiaries in the Northeast US is the first to investigate the short-term and delayed effects of PM$_{2.5}$ on CVD hospitalization rates. The analysis found that increased exposure to PM$_{2.5}$ during the current and prior six months is related to increased rates of CVD hospitalization. A limitation of our study is that we did not explicitly model spatial dependence. However, because we adjusted for numerous ZIP code–level predictors of exposure and mortality, including demographics, socioeconomic status, and meteorological variables, any remaining spatial dependence is likely minimal. Furthermore, assuming independent residuals across ZIP codes, remaining spatial dependence would not affect our point estimates but could impact the estimated influence function-based variances. It would be valuable to incorporate a specialized blocked bootstrap that captures spatial dependence \citep{wilks1997resampling,lahiri2013resampling}. Alternatively, an m-out-of-n bootstrap could be implemented by carefully selecting the subset size $m$ to properly inflate variance estimates \citep{wu2020evaluating}. 

Our methods rely on the assumption of no unmeasured confounding. Various causal methods have been developed to relax this assumption under specific conditions, such as instrumental variables, negative controls, and difference-in-differences \citep{athey2006identification, ertefaie2017tutorial,ye2023instrumented}. While most existing literature has focused on single time-point exposures, some approaches have been proposed to address time-varying unmeasured confounding \citep{chen2023estimating, cui2023instrumental,han2023optimal}. Extending methods for unmeasured confounding to structural nested models for recurrent event processes is an important direction for future research.

\section*{Data Availability}
Exposure data is publicly available at NASA's Socioeconomic Data and Applications Center (https://www.earthdata.nasa.gov/centers/sedac-daac). Census and ACS data are available from https://www.census.gov. Medicare data is provided by the US Centers for Medicare and Medicaid Services, however our data use agreement prohibits the sharing of the data sets used in our analysis. Academic and non-profit researchers interested in using Medicare data should contact the US Centers for Medicare and Medicaid Services directly.

\section*{Acknowledgments}
This work was supported by the National Institutes of Health grants R01ES034021, \linebreak R01ES035735, R01ES037156, R33NS120240, R01DA058996, and R01DA048764.
The computations in this paper were run on the FASSE cluster supported by the FAS Division of Science Research Computing Group at Harvard University and the Research Computing Environment supported by Institute for Quantitative Social Science in the Faculty of Arts and Sciences at Harvard University. 

\section*{Conflict of interest}
The authors have no competing interests to disclose.

\bibliographystyle{apalike}
\bibliography{ref2}

\clearpage
\appendix

\setcounter{section}{0}
\setcounter{equation}{0}

%\doublespacing
\renewcommand*{\thesection}{S\arabic{section}}
\renewcommand*{\thesubsection}{S\arabic{section}.\arabic{subsection}}
\renewcommand{\theequation}{S\arabic{equation}}
\renewcommand{\thefigure}{S\arabic{figure}}
\renewcommand{\thetable}{S\arabic{table}}
\renewcommand{\thetheorem}{S\arabic{theorem}}
\renewcommand{\thelemma}{S\arabic{lemma}}
\renewcommand{\bibnumfmt}[1]{[S#1]}
\renewcommand{\citenumfont}[1]{S#1}
\setcounter{page}{1}

{\centering\Large \textbf{Supplementary Materials}}

\section{Nuisance parameter estimation} \label{app:nuisance}
\subsection{Parametric estimation of $\mu_{km}(t)$} \label{app:prop}
Web Appendix A of \cite{seaman2020adjusting} shows that under the no unmeasured confounder assumption for any $t \geq k$, the conditional expectation of $A_k$ given $\bar A_{k-1}$, $\bar L_k$ and $D(\bar A_k,0) \geq t$ follows the same generalized linear model but with a 
time-dependent shifted intercept. That is, 
for any $t \geq k$,
\begin{align} \label{eq:prop}
g[  \mathbb{E}\{A_{k}|\oA_{k-1},\oL_k,D(\oA_{k},0) \geq t\} ] = 
g[ \mu_{k0}(t) ] = \tilde\theta_{k0}^\top H_k + \theta_k^\top \nu_k(t) ,    
\end{align}
where $g$ is an appropriate link function and $H_k=(\oA_{k-1},\oL_k)$. 

When $t \in [k,k+1)$ and $m=0$, $D(\oA_{k},0)$ corresponds to the observed event time $D$ and thus, the conditional expectation can be written as $g[  \mathbb{E}\{A_{k}|\oA_{k-1},\oL_k,D \geq t\} ] = \tilde\theta_{k0}^\top H_k + \theta_{k0} (t-k)$. To estimate $\theta_{k0}$, we first create a new  variable $\tilde t$ which consists of $\tilde m$ equally spaced values of $t \in [k,k+1) $. Then, for each individual, we create a copy with the same values of $(\oA_{k-1},\oL_k)$ and the new variable   $\tilde t$ (i.e., pseudo-data). Finally, we fit $g[  \mathbb{E}\{A_{k}|\oA_{k-1},\oL_k,\tilde t\} ] = \theta_{k0}^\top H_k + \theta_{kk} (\tilde t-k)$.

When $t \in [k,k+1)$ and $m=1$, the treatment model (\ref{eq:prop}) is $g[  \mathbb{E}\{A_{k-1}|\oA_{k-2},\oL_{k-1},\linebreak D(\oA_{k-1},0) \geq t\} ] = \tilde\theta_{k0}^\top H_k + \theta_{k0}+\theta_{k1} (\tilde t-k)$. The main challenge is that $D(\oA_{k-1},0)$ is no longer known. To overcome this issue, we first introduce  the following equality presented in Web Appendix B of \cite{seaman2020adjusting},
\begin{align}
    P \{D(\bar{A}_k, 0) \geq t \mid \bar{A}_k, \bar{L}_k, D(\bar{A}_k, 0) \geq k\} 
= E \{Y(t) w_{k1}(t) \mid \bar{A}_k, \bar{L}_k, D \geq k\},
\end{align}
where 
\begin{equation}
	w_{km}(t)= 
    \prod_{j=0}^{m-1}
    \exp\left\{
        A_{k-j}\boldsymbol\nu_{jk}(t)'\boldsymbol\alpha_m \right\}
\end{equation}
with $\boldsymbol\alpha_m=[\alpha_0,\ldots,\alpha_{m-1}]'$ is an $m$-length vector of parameters for the effect of exposure on the terminal event rate; and $\boldsymbol\nu_{jk}(t)$ is a vector of $j$ ones followed by $t-k$. %For example, when $m=1$ and $j=0$, $\boldsymbol\alpha_m=\alpha_0$ and $\nu_{jk}(t)=t-k$; when $m=3$ and $j=2$, $\boldsymbol\alpha_m=[\alpha_0,\alpha_1,\alpha_2]'$ and $\nu_{jk}(t)=[1,1,t-k]'$. 
Then, similar to the case with $m=0$, we fit a weighted model $g[  \mathbb{E}\{A_{k-1}|\oA_{k-2},\oL_{k-1},D(\oA_{k-1},0) \geq t\} ] = \tilde\theta_{(k-1)0}^\top H_{k-1} + \theta_{k0}+\theta_{k1} (\tilde t-k)$ using 
pseudo-data with weight $w_{k1}(\tilde t)= 
    \exp\left\{
        A_{k} \alpha_{0} (\tilde t -k) \right\}$. 
In Section \ref{app:risk} we discuss an estimation strategy for $\boldsymbol\alpha_m$. 

\subsection{Risk set adjusting weight model} \label{app:risk}

The parameter $\alpha_m$ represents the instantaneous (i.e., $m=0$) and delayed effect (i.e., $m>0$) of exposures on the probability of survival to time $t$. We can estimate  these parameters using the same estimating equations as those used for estimating  $\beta_m$ but replacing the recurrent event process with the survival counting process. Specifically, we define $\hat \alpha_0$ as a solution to 
\begin{equation}\label{eq:A0_est_eq}
    \sum_{i=1}^n\sum_{k=0}^K \int_{k}^{k+1}
    Y_i(t)
    \left\{A_{ik}-\mu_{ik0}(t)\right\}
    \left\{d\tilde N_i(t)-  A_{ik}\alpha_0dt\right\}
    =0,
\end{equation}
where $\mu_{k0}(t)=\mathbb{E}[A_{k}|\oA_{k-1},\oL_k,D(\oA_{k},0) \geq t]$ is a time-varying exposure model. Similarly, define $\hat \alpha_m$ as a solution to the following  unbiased estimating equation 
\begin{equation}\label{eq:Am_est_eq}
    \sum_{i=1}^n\sum_{k=0}^K\int_k^{k+1} 
    Y_i(t)w_{ikm}(t)
    \Delta_{ikm}(t)
    \left\{d\tilde N_i(t)- 
        \sum_{j=0}^{m-1}A_{i(k-j)}\alpha_j dt - 
        A_{i(k-m)}\alpha_m dt
    \right\}
    =0.
\end{equation}

%\subsection{Estimating exposure model parameters, $\mu_{km}(t)$}

%Define $\mu_{km}(t)=\mathbb{E}[A_k|\oA_{k-m-1},\oL_{k-m},D(\oA_{k-m},0)\geq t]$. Following assumption 1b we note that both $N_{(\oA_{k-m-1},0)}$ and $\tilde N_{(\oA_{k-m-1},0)}$ are independent of $A_{k-m}$ when conditioning on $\{\oA_{k-m-1},\oL_{k-m}\}$ and $D\geq k-m$.

%Then, $\mu_{km}(t)$ is fit using any parametric or nonparametric estimator, e.g., generalized additive model, gradient boosting. A simplifying assumption is made if $k$ is reasonably small for the exposure of interest and we consider $\mu_{km}(t)=\mu_{km}$ to be non-time-varying. In our simulation and data analysis, we simplify computation by assuming $A_k\indep D$ allowing us to fit a single model to estimate all of $\mu_{(k+j)(j)}$ for $j\in\{0,\ldots,M\}$. Allowing $\mu_{km}$ to be time-varying requires further dividing the time period $[k-m,k-m+1)$ for each observation and populating based on time-varying covariates and survival. We refer to \citet{seaman2020adjusting} and $\citet{Dukes2019OnDifference}$ who describe this approach.

\subsection{Estimating nuisance parameter, $d\rho_{km}$}
The general procedure of obtaining $d\hat \rho_{km}(t)$ is similar to the exposure model. Consider a case where $m=0$ and $t\in[k,k+1)$. First, we create a new variable $\tilde t$ which consists of $\tilde m$ equally spaced values of $t\in[k,k+1)$. Then for each individual create a copy with the same values of $(\oA_{k-1},\oL_k)$ and the new variable $\tilde t$ (i.e., pseudo-data). Define new variables $E^{m=0}_{k1}, E^{m=0}_{k2}, \ldots, E^{m=0}_{k\tilde m}$ that captures the total number of events during $\tilde m$ periods between $k$ to $k+1$. Specifically, for each individual $i$, define $E^{m=0}_{ik1}=\int_{k}^{k+1/\tilde m}dN_i(t)$, $E^{m=0}_{ik2}=\int_{k+1/\tilde m}^{k+2/\tilde m}dN_i(t)$, $\cdots$, $E^{m=0}_{ik\tilde m}=\int_{k+(\tilde m-1)/\tilde m}^{k+1}dN_i(t)$. Using an appropriate nonparametric statistical learning method (e.g., gradient boosting) for continuous outcomes we model $E^{m=0}_{kj}$ as a function of prior exposures, $\oA_{i(k-1)}$, covariates $\oL_{i(k)}$ and $(\tilde t -k)$ using the created pseudo-data.

For $m>0$ and $t\in[k,k+1)$ define $E^{m}_{ik1}=\int_{k}^{k+1/\tilde m}dN_i(t)-\sum_{j=0}^{m-1}A_{i(k-j)}\hat\beta_j$, $E^{m}_{ik2}=\linebreak \int_{k+1/\tilde m}^{k+2/\tilde m}dN_i(t)-\sum_{j=0}^{m-1}A_{i(k-j)}\hat\beta_j$, $\cdots$, $E^{m}_{ik\tilde m}=\int_{k+(\tilde m-1)/\tilde m}^{k+1}dN_i(t)-\sum_{j=0}^{m-1}A_{i(k-j)}\hat\beta_j$. These values represent the total number of events during $\tilde m$ periods $k-m$ to $k-m+1$ minus the `blip' due to exposures during $0$ to $m-1$ periods prior. We now model $E^{m}_{kj}$ as a function of prior exposures, $\oA_{i(k-m-1)}$, covariates $\oL_{i(k-m)}$ and $(\tilde t -k)$  using a created pseudo-data with weight $w_{km}(\tilde t)$.

\begin{remark}
    In the simulation and data analysis we fit a separate model for each combination of $k$ and $m$. In the simulation we allowed time-varying exposure and nuisance parameters setting $\tilde m=5$ (i.e., creating 5 pseudo-data with distinct variables $\tilde t$). Due to the large sample size in the data analysis, it is computationally infeasible to create pseudo-data and allow for a time-varying exposure model and nuisance parameter. However, it is realistic in the context of the problem to assume that each $\mu_{km}$ and $d\rho_{km}$ remains constant for any $t \in [l,l+1)$, $l=0,\cdots,K$. This implies that the functions are piecewise constant between each landmark time.
\end{remark}

\section{Asymptotics: parametric exposure model}

Suppose that our parametric nuisance functions satisfy  Assumption \ref{ass:parm}.

\subsection{Asymptotic linearity of $\hat \beta_0$ with parametric $\mu_0$}
\label{sec:par-proof-m.eq.0}
Under Assumption (\ref{ass:parm}a), we can write
\begin{align}
 \label{eq:mu-exp0}
 \hat\mu_{k0}(t)-\mu_{k0}(t)\ &=\ \dot\mu_{k0\tilde \theta_{k}}(t)(\hat\theta_{k}-\tilde\theta_{k})+o_P(n^{-1/2})\\ 
 \label{eq:mu-exp}
 &=(P_n-P_0)\dot\mu_{k0 \tilde\theta_{k}}(t) \phi_{k 0 \tilde\theta_{k}}+o_p(n^{-1/2}).
 \end{align}
Define the estimator $\hat \beta_0$ as
\[
\hat \beta_0 = \frac{P_n f(\hat \mu) }{P_n g(\hat \mu)},
\]
where $f(\hat \mu) = \sum_{k=0}^K \int_{k}^{k+1} Y(t)\left\{A_{k}-\hat \mu_{k0}(t)\right\}dN(t)$ and $g(\hat \mu) = \sum_{k=0}^K \int_{k}^{k+1} Y(t)\left\{A_{k}-\hat \mu_{k0}(t)\right\}A_{k}dt.$ 
Also, let $f(\mu)$ and
$g(\mu)$ denote these same
quantities with $\hat \mu_{k0}(t)$
replaced by $\mu_{k0}(t)$ for $k=0,\ldots,K.$

Letting $dN_{0}(t) = dN(t)-\tilde \beta_0 A_{k}dt,$ we then have
\begin{align*}
    \hat \beta_0 &= \frac{P_n \sum_{k=0}^K \int_{k}^{k+1} Y(t)\left\{A_{k}-\hat\mu_{k0}(t)\right\}dN(t)}{P_n g(\hat \mu)}\\
    &=\frac{P_n \sum_{k=0}^K \int_{k}^{k+1} Y(t)\left\{A_{k}-\hat\mu_{k0}(t)\right\}dN_{0}(t)}{P_n g(\hat \mu)}+\frac{P_n \sum_{k=0}^K \int_{k}^{k+1} Y(t)\left\{A_{k}-\hat\mu_{k0}(t)\right\}\beta_0A_{k}dt}{P_n g(\hat \mu)}\\
    & = \beta_0 +\frac{P_n \sum_{k=0}^K \int_{k}^{k+1} Y(t)\left\{A_{k}-\hat\mu_{k0}(t)\right\}dN_{0}(t)}{P_n g(\hat \mu)},
    \end{align*}
or that
\begin{equation}
\label{eq:betahat-par}
    \hat \beta_0 - \beta_0
= \frac{P_n \sum_{k=0}^K \int_{k}^{k+1} Y(t)\left\{A_{k}-\hat\mu_{k0}(t)\right\}dN_{0}(t)}{P_n g(\hat \mu)}.
\end{equation}
The no-unmeasured confounder assumption implies that the counterfactual counting process $N_{0}(t) \equiv N_{\bar A_{k-1},0}(t)$ is conditionally independent of $A_k$ given $\bar A_{k-1}$, $L_k$ and $D(\bar A_{k-1},0) \wedge C(\bar A_{k-1},0) \geq k$. Therefore,
\[
P_0 \sum_{k=0}^K \int_{k}^{k+1} Y(t)\left\{A_{k} - \mu_{k0}(t)\right\}dN_{0}(t) = 0.
\]

%\textcolor{red}{[How does the assumption above followfrom our model assumptions? We need to be clear on conditions for validity for both administrative and non-administrative censoring. See also the change in modeling assumption at the beginning of Section 4, the old one was not correctly given.]}
Consequently, we can write the numerator of (\ref{eq:betahat-par}) as
\begin{align*}
P_n& \sum_{k=0}^K  \int_{k}^{k+1} Y(t)  \left\{A_{k}-\hat\mu_{k0}(t)\right\}dN_{0}(t) 
  = \\
&(P_n-P_0)
\sum_{k=0}^K \int_{k}^{k+1} Y(t)\left\{A_{k} - \mu_{k0}(t)\right\}dN_{0}(t)
+
P_n 
\sum_{k=0}^K \int_{k}^{k+1} Y(t)\left\{\mu_{k0}(t)-\hat\mu_{k0}(t)\right\}dN_{0}(t).
\end{align*}
Using \eqref{eq:mu-exp0}, the second term on the right-hand side can be written
\begin{align*}
P_n& 
\sum_{k=0}^K \int_{k}^{k+1} Y(t)\left\{\mu_{k0}(t)-\hat\mu_{k0}(t)\right\}dN_{0}(t)
  =\\ 
 &-
\sum_{k=0}^K 
P_n \left[ 
\int_{k}^{k+1}
\dot\mu_{k0\tilde \theta_{k}}(t)
Y(t) d N_0(t) \right]
(\hat\theta_{k}-\tilde\theta_{k}) + o_p(n^{-1/2});
\end{align*}
defining
\[
\gamma_{k0} = 
P_0 \left[ 
\int_{k}^{k+1}
\dot\mu_{k0 \tilde\theta_{k}}(t)
Y(t) d N_0(t) \right]
\]
and using \eqref{eq:mu-exp} and the fact that
\[
(P_n - P_0) \left[ 
\int_{k}^{k+1}
\dot\mu_{k0\tilde \theta_{k}}(t)
Y(t) d N_0(t) \right]
(\hat\theta_{k }-\tilde\theta_{k })  = o_p(n^{-1/2}),
\]
we can now write
\[
P_n \sum_{k=0}^K \int_{k}^{k+1} Y(t) \left\{A_{k}-\hat\mu_{k0}(t)\right\}dN_{0}(t)
= G_n + o_p(n^{-1/2})
\]
where
\begin{align*}
G_n & = 
(P_n-P_0)
\sum_{k=0}^K \int_{k}^{k+1} Y(t)\left\{A_{k} - \mu_{k0}(t)\right\}dN_{0}(t)
-
(P_n-P_0)
\sum_{k=0}^K 
\phi_{k 0 \tilde\theta_{k}} \gamma_{k0}.
\end{align*}
Turning to the denominator of \eqref{eq:betahat-par}, note that
 \begin{align*}
     \frac{1 }{P_n g(\hat \mu)}-\frac{1 }{P_0 g( \mu)} &= \frac{P_n(g( \mu)-g(\hat \mu)) }{P_n g(\hat \mu)P_n g( \mu)}-\frac{(P_n-P_0)g(\mu) }{P_0 g( \mu)P_n g( \mu)}\\
     &=\frac{P_n(g( \mu)-g(\hat \mu)) }{P_n g(\hat \mu)P_n g( \mu)} - 
     \left[ \frac{(P_n-P_0)g(\mu) }{(P_0 g( \mu))^2} - \frac{\{(P_n-P_0)g(\mu)\}^2 }{(P_0 g( \mu))^2P_n g( \mu)} \right]\\
     &=\frac{P_n(g( \mu)-g(\hat \mu)) }{P_n g(\hat \mu)P_n g( \mu)} - \frac{(P_n-P_0)g(\mu) }{(P_0 g( \mu))^2} + o_p(n^{-1/2})\\
     &=\frac{P_0(g( \mu)-g(\hat \mu)) }{(P_0 g( \mu))^2} - \frac{(P_n-P_0)g(\mu) }{(P_0 g( \mu))^2} + o_p(n^{-1/2}).
 \end{align*}
 Since the first two terms on the right-hand side are
 $O_p(n^{-1/2}),$ the influence function is given by
\begin{align*}
    \hat \beta_0 -  \tilde\beta_0 &= \frac{(P_n-P_0) \sum_{k=0}^K  \int_{k}^{k+1} Y(t)\left\{A_{k}-\mu_{k0}(t)\right\}dN_{0}(t)}{P_0 g( \mu)} \\
    &\hspace{0.1in}-\frac{(P_n-P_0) \sum_{k=0}^K \phi_{k 0 \tilde\theta_{k}} \gamma_{k0}}{P_0 g( \mu)}+o_p(n^{-1/2}).
\end{align*}
That is, 
$\hat \beta_0 - \tilde \beta_0
= (P_n-P_0) \phi_{\tilde \beta_0}
+ o_P(n^{-1/2})$, where
\[
\phi_{\tilde \beta_0}
= \frac{1}{P_0 g(\mu)}
\times
\left\{
\sum_{k=0}^K  
\left[
\int_{k}^{k+1} Y(t)\left\{A_{k}-\mu_{k0}(t)\right\}dN_{0}(t)
-
 \phi_{k 0 \tilde\theta_{k}} \gamma_{k0}
\right] 
\right\}.
\]

\subsection{Asymptotic linearity of $\hat{\beta}_m$ with parametric $\mu_{km}$}
\label{sec:ALpar-km}

Following notational conventions similar
to Section \ref{sec:par-proof-m.eq.0},
for a given $m \geq 1$ define
\[
f(\mu,w,\beta_{0:(m-1)})=\sum_{k=0}^{K}\int_{k}^{k+1} Y(t) w_{km}(t) \{A_{k-m} - \mu_{km}(t) \}
     \{dN(t)-\sum_{j=0}^{m-1} A_{k-j}\beta_{j} dt\}
\]
and 
$g(\mu,w)=\sum_{k=0}^{K}\int_{k}^{k+1} Y(t){w}_{km}(t) \{A_{k-m} - \mu_{km}(t) \}A_{k-m}dt.$ We can then
define $\hat{\beta}_m$ as
\begin{align}
\hat{\beta}_m=\frac{P_nf(\hat{\mu},\hat{w},\hat{\beta}_{0:(m-1)})}{P_ng(\hat{\mu},\hat{w})},
\end{align}
where $\hat{w}_{km}(t)$ is as defined previously 
using estimates $\hat{\alpha}_0,\ldots,\hat{\alpha}_{m-1}$.
Under Assumption (\ref{ass:parm}.b),
\begin{align*}
 \hat\mu_{km}(t)-\mu_{km}(t)\ &=\ \dot\mu_{km\tilde \theta}(t)(\hat\theta_{km}-\tilde\theta_{km})+o_P(n^{-1/2})
% \\ 
% \label{eq:mu-exp}
% &
 =(P_n-P_0)\dot\mu_{km \tilde\theta}(t) \phi_{km \tilde\theta}+o_P(n^{-1/2}).
 \end{align*}
and, in addition, 
 \begin{align*}
 \hat{w}_{km}(t)- w_{km}(t)\ &=\ \dot w_{km \alpha}(t)(\hat\alpha_{k}-\tilde\alpha_{k})+o_P(n^{-1/2})
 %\nonumber\\ &
 =(P_n-P_0)\dot w_{km \tilde\alpha}(t) \phi_{km \tilde\alpha}+o_P(n^{-1/2}).  
 \end{align*}
% \textcolor{blue}{CHECK: I think $\alpha$ in $\phi_{km \tilde\alpha_k}$ depends on $k$, per Assump 2.b}\\

Importantly, the estimating equation leading to 
$\hat \beta_m$ for any valid choice of $m \geq 1$  depends on the prior sequence of
$\hat \beta_j, j = 0,\ldots,m-1.$ The influence
function for estimating $\tilde \beta_0$
is derived in Section \ref{sec:par-proof-m.eq.0}; that is, we obtain the result that
$\hat{\beta}_{0}-\tilde\beta_{0}=(P_n-P_0)\phi_{\tilde\beta_0} + o_p(n^{-1/2}).$
The influence
function derived 
for $\hat{\beta}_1  - \tilde\beta_1$
will depend on $\phi_{\tilde\beta_0};$
call this influence function $\phi_{ \tilde\beta_1};$ proceeding sequentially,
the influence function for
$\hat{\beta}_m  - \tilde\beta_m$
will depend on
$\phi_{ \tilde\beta_j}, j = 0,\ldots,
m-1.$  Hence, our proof for a given $m \geq 1$ will essentially
proceed by induction, where we will
assume that
$\hat{\beta}_{j}-\tilde\beta_{j}=(P_n-P_0)\phi_{\tilde\beta_j} + o_p(n^{-1/2})$ for each $j \leq m-1,$ and subsequently use these
to derive that for $\hat \beta_m.$

Fix $m \geq 1.$ Let $dN^{(m)}_0(t)=dN(t)-(\sum_{j=0}^{m-1} A_{k-j}\tilde\beta_{j}+A_{k-m}\tilde\beta_m)dt$. Then, note that
 \begin{align*}
    &\hat{\beta}_m = \frac{P_n\sum_{k=0}^{K}\int_{k}^{k+1}Y(t)\hat{w}_{km}(t)\{A_{k-m}-\hat{\mu}_{km}(t)\}\{dN(t)-(\sum_{j=0}^{m-1}A_{k-j} \hat{\beta}_{j})dt\}}{P_ng(\hat{\mu},\hat w)}\\
    &=\tilde\beta_m+ \frac{P_n\sum_{k=0}^{K}\int_{k}^{k+1}Y(t)\hat{w}_{km}(t)\{A_{k-m}-\hat{\mu}_{km}(t)\}\{dN(t)-(\sum_{j=0}^{m-1}A_{k-j} \hat{\beta}_{j}+A_{k-m}\tilde\beta_m)dt\}}{P_ng(\hat{\mu},\hat w)}.
\end{align*}

Recalling the definition 
$\Delta_{km}(t) = A_{k-m} - \mu_{km}(t),$
the numerator of the right hand side can be written as
\begin{align}
P_n\sum_{k=0}^{K}&\int_{k}^{k+1}Y(t)\hat{w}_{km}(t) \hat \Delta_{km}(t)
   % \{A_{k-m}-\hat{\mu}_{km}(t)\}
    \left\{dN(t)-\left(\sum_{j=0}^{m-1}A_{k-j} {\tilde\beta}_{j}-A_{k-m}\tilde\beta_m\right)dt\right\} - \label{eq:m1} \\
&P_n\sum_{k=0}^{K}\int_{k}^{k+1}Y(t)\hat{w}_{km}(t)
    \hat \Delta_{km}(t)
    %\{A_{k-m}-\hat{\mu}_{km}(t)\}
    \left\{\sum_{j=0}^{m-1}(\hat\beta_{j}-\tilde{\beta}_{j})A_{k-j}dt\right\}. \label{eq:m2}
\end{align}
The term \eqref{eq:m1} can be rewritten as follows:
\begin{align}
    (\ref{eq:m1}) = P_n\sum_{k=0}^{K}\int_{k}^{k+1}Y(t)\hat {w}_{km}(t)
    \hat \Delta_{km}(t)
    %\{A_{k-m}-\hat{\mu}_{km}(t)\} 
    dN^{(m)}_0(t) 
    %\nonumber \\
     = (A) + (B) + (C) + (D),
 \end{align}
 where
 \[
 (A) = P_n \sum_{k=0}^{K}\int_{k}^{k+1}Y(t){w}_{km}(t) \Delta_{km}(t)
 %\{A_{k-m}-{\mu}_{km}(t)\} 
 dN^{(m)}_0(t),
 \]
 \[
 (B) = P_n \sum_{k=0}^{K}\int_{k}^{k+1}Y(t) {w}_{km}(t)\{{\mu}_{km}(t) - \hat{\mu}_{km}(t)\} dN^{(m)}_0(t),
 \]
 \[
 (C) = P_n \sum_{k=0}^{K}\int_{k}^{k+1}Y(t)\{ \hat {w}_{km}(t) -  {w}_{km}(t)\}
 \hat \Delta_{km}(t)
 %\{A_{k-m}-{\mu}_{km}(t)\} 
 dN^{(m)}_0(t),
 \]
 and
 \[
 (D) = P_n \sum_{k=0}^{K}\int_{k}^{k+1}Y(t)\{ \hat {w}_{km}(t) -  {w}_{km}(t)\} \{{\mu}_{km}(t) - \hat{\mu}_{km}(t)\} dN^{(m)}_0(t).
 \]
 Under our assumptions, we have 
 \[
 P_0 \sum_{k=0}^{K}\int_{k}^{k+1}Y(t){w}_{km}(t)
 \Delta_{km}(t)
 %\{A_{k-m}-{\mu}_{km}(t)\} 
 dN^{(m)}_0(t) = 0  %~~~\textcolor{blue}{(correct ????)}
 \]
and $(D) = o_p(n^{-1/2});$ hence, $(\ref{eq:m1}) = (A') + (B) + (C) + o_P(n^{-1/2}),$ where
 \[
 (A') = (P_n-P_0) \sum_{k=0}^{K}\int_{k}^{k+1}Y(t){w}_{km}(t)
 \Delta_{km}(t)
 %\{A_{k-m}-{\mu}_{km}(t)\} 
 dN^{(m)}_0(t).
\]
Consider term (B); using expansions given earlier,
\begin{align}
(B) &= P_n \sum_{k=0}^{K}\int_{k}^{k+1}Y(t) {w}_{km}(t)\{{\mu}_{km}(t) - \hat{\mu}_{km}(t)\} dN^{(m)}_0(t)  \nonumber \\
& = - P_n \sum_{k=0}^{K}\int_{k}^{k+1}Y(t) {w}_{km}(t)\{ \dot\mu_{km\tilde \theta}(t)(\hat\theta_{k}-\tilde\theta_{k})+o_P(n^{-1/2}) \} dN^{(m)}_0(t)  \nonumber \\
& = -  \sum_{k=0}^{K}  (\hat\theta_{k}-\tilde\theta_{k})  \, P_n \int_{k}^{k+1}Y(t) {w}_{km}(t) \dot\mu_{km\tilde \theta_{}}(t) dN^{(m)}_0(t)   +o_P(n^{-1/2}) \nonumber \\
& = - \sum_{k=0}^{K} (P_n-P_0) \phi_{km \tilde \theta}  \, P_n \int_{k}^{k+1}Y(t) {w}_{km}(t) \dot\mu_{km\tilde \theta_{}}(t) dN^{(m)}_0(t)   +o_P(n^{-1/2}) \nonumber \\
& = - (P_n-P_0) \sum_{k=0}^{K}  \phi_{km \tilde \theta}  \, P_0 \int_{k}^{k+1}Y(t) {w}_{km}(t) \dot\mu_{km\tilde \theta_{}}(t) dN^{(m)}_0(t)   +o_P(n^{-1/2}) \nonumber \\ 
& = - (P_n-P_0) \sum_{k=0}^{K}  
\phi_{km \tilde \theta}  \,
\xi^{(1)}_{km \tilde \theta_k} 
%P_0 \int_{k+m}^{k+1+m}Y(t) {w}_{km}(t) \dot\mu_{k0\tilde \theta_{k}}(t) dN_0(t)   
+o_P(n^{-1/2}) \nonumber 
\end{align}
where
$\xi^{(1)}_{km \tilde \theta_k} = P_0 \int_{k}^{k+1}Y(t) {w}_{km}(t) \dot\mu_{km\tilde \theta_{}}(t) dN^{(m)}_0(t).$
A similar series of arguments shows that term (C) can be written
\begin{align}
(C) &= \sum_{k=0}^{K}   (\hat\alpha_{k}-\tilde\alpha_{k}) 
 \, P_n \int_{k}^{k+1}Y(t) \dot w_{km \tilde \alpha}(t)  
 \Delta_{km}(t)
 %\{A_{k-m}-{\mu}_{km}(t)\} 
 dN^{(m)}_0(t)   +o_P(n^{-1/2}). \nonumber \\
 & = 
 (P_n-P_0) \sum_{k=0}^{K-m-1}   
 \phi_{km\tilde \alpha} 
 \, P_0 \int_{k}^{k+1}Y(t) \dot w_{km \tilde \alpha}(t)  
 \Delta_{km}(t)
 %\{A_{k-m}-{\mu}_{km}(t)\} 
 dN^{(m)}_0(t)   +o_P(n^{-1/2}) \nonumber \\
& = 
 (P_n-P_0) \sum_{k=0}^{K-m-1}   
 \phi_{km \tilde \alpha} 
 \, \xi^{(2)}_{km \tilde \alpha} +o_P(n^{-1/2}). \nonumber 
 \end{align}
 where
 $\xi^{(2)}_{km \tilde \alpha} = P_0 \int_{k}^{k+1}Y(t) \dot w_{km \tilde \alpha}(t)  \Delta_{km}(t)
 %\{A_{k-m}-{\mu}_{km}(t)\} 
 dN^{(m)}_0(t).$
 
 \begin{comment}
 \textcolor{red}{
 The terms
 \[
 P_0 \int_{k+m}^{k+1+m}Y(t) {w}_{km}(t) \dot\mu_{km\tilde \theta_{}}(t) dN_0(t)
 \]
 and
 \[
 P_0 \int_{k}^{k+1}Y(t) \dot w_{km \alpha}(t)  \{A_{k-m}-{\mu}_{km}(t)\} dN_0(t) 
 \]
 are similar to  terms appearing in Ashkan's expansions.  A question I haven't been able to resolve is whether the 
 above terms can be simplified and made to look exactly like the terms in Ashkan's final result. It would be nice
 to see the string of arguments necessary to prove that equivalence.\\[1ex]
 12 }
 \end{comment}

Now, recall that we have assumed
that $\hat{\beta}_{j}$ has
influence function 
$\phi_{\tilde\beta_j}$
for $j \leq m-1.$
Similar arguments to those given 
above then show that the  term (\ref{eq:m2}) can be linearized as follows:
\begin{align}
\label{eq: m2simp}
    (\ref{eq:m2}) &= 
(P_n-P_0) \left\{ 
\sum_{j=0}^{m-1} \phi_{\tilde\beta_j} \sum_{k=0}^{K}  
    \int_{k}^{k+1}  \gamma_{kmj}(t) dt 
    \right\} + o_p(n^{-1/2}),
\end{align}
where $\gamma_{kmj}(t) = P_0  Y(t) w_{km}(t)
\Delta_{km}(t) 
%{A_{k-m}-{\mu}_{km}(t)\}
A_{k-j}
=
P_0  Y(t) w_{km}(t)
\{A_{k-m}-{\mu}_{km}(t)\}
A_{k-j}.$
Consequently, collecting terms,
\begin{align*}
\hat{\beta}_m & - \tilde\beta_m
= \frac{1}{P_0 g(\mu,w)} \times
\biggl[
 (P_n-P_0) \sum_{k=0}^{K}\int_{k}^{k+1}Y(t){w}_{km}(t)\{A_{k-m}-{\mu}_{km}(t)\} dN^{(m)}_0(t) \\
& + (P_n-P_0) \sum_{k=0}^{K} 
\{ \phi_{km \tilde \alpha} 
 \, \xi^{(2)}_{km \tilde \alpha}
- \phi_{km \tilde \theta}  \,
\xi^{(1)}_{km \tilde \theta_k} \}
\\
%\\
%P_0 \int_{k+m}^{k+1+m}Y(t) {w}_{km}(t) \dot\mu_{k0\tilde \theta_{k}}(t) dN_0(t) \\
%& 
%+ (P_n-P_0) \sum_{k=0}^{K-m-1}   
% \phi_{km\tilde \phi} \xi^{(2)}_{km \tilde \alpha} \\
% \, P_0 \int_{k+m}^{k+1+m}Y(t) \dot w_{k,m, \alpha}(t)  \{A_{k-m}-{\mu}_{km}(t)\} dN_0(t) \\
& -
(P_n-P_0) \left\{ 
\sum_{j=0}^{m-1} \phi_{\tilde\beta_j} \sum_{k=0}^{K}  
    \int_{k}^{k+1}  \gamma_{kmj}(t) dt 
    \right\} 
\biggr]
+  o_P(n^{-1/2});
\end{align*}
or, equivalently,
\begin{align*}
\hat{\beta}_m  - \tilde\beta_m
= \frac{1}{P_0 g(\mu_0,w)} & \times
(P_n-P_0) \sum_{k=0}^{K} \biggl[
 \int_{k}^{k+1}Y(t){w}_{km}(t)\{A_{k-m}-{\mu}_{km}(t)\} dN^{(m)}_0(t) \\
& + 
\{ \phi_{km \tilde \alpha} 
 \, \xi^{(2)}_{km \tilde \alpha}
- \phi_{km \tilde \theta}  \,
\xi^{(1)}_{km \tilde \theta_k} \}
-
 \sum_{j=0}^{m-1} \phi_{\tilde\beta_j} 
    \int_{k}^{k+1}  \gamma_{kmj}(t) dt 
\biggr]
+  o_P(n^{-1/2}).
\end{align*}

\section{Asymptotics: nonparametric exposure model}

In this section, we proceed similarly
to the parametric case,
first establishing the result
for $m = 0$
with nonparametric $\mu_{k0}$ and $d \rho_{k0},$
and then using
this result to sequentially 
establish the asymptotic linearity of $\hat \beta_m.$

\def\E{E}
\def\R{\mathbbmss{R}}

Below, we give a useful lemma that will assist in proving
the required results; this is a generalization of Lemma 4 in
\cite{ErtefaieRQ}. As preparatory notation, 
let $\| \bm{x} \|_q$ denote the usual 
$q-$ norm of a vector $\bm{x}$  for $q=1,2,\infty$.

\begin{lemma}\label{lem:help2cvg}
Let $\bm{B}_1,\ldots, \bm{B}_N$ be independent, identically distributed vectors from $P_0,$ where $\bm{B}_i \in {\mathcal B} \subset \R^d.$
Let $\bm I_n$ be a randomly chosen subset of the integers $1,\ldots,N$ of length $n = O(N)$ and let its complement 
$\bm I^c_n$ have $N-n$ elements.
Let $\bm{F}_{I_n}$ and $\bm{F}_{I^c_n}$ be the corresponding disjoint subsets of $\bm{B}_1,\ldots,\bm{B}_{N}.$ 
Let $\gamma: {\mathcal B} \rightarrow \R$ and let $\hat \gamma(\cdot; \bm{F}_{I^c_n})$ be an estimator of $\gamma(\cdot)$ 
derived from the data $\bm{F}_{I^c_n}.$  Finally, let
$D_i \subset B_i$ be a strictly smaller subset of the data
in $B_i$, and let
$h(B_i; \hat \gamma(\bm{B}_i; \bm{F}_{I^c_n}))$ be a 
$d \times 1$ vector function
of the data $B_i \in I_n$ such that
$ E\left( h_j(B_i; \hat \gamma(\bm{B}_i; \bm{F}_{I^c_n})) \mid D_i, \bm{F}_{I^c_n} \right) = 0$ and
$\mbox{Var}\left( h_j(B_i; \hat \gamma(\bm{B}_i; \bm{F}_{I^c_n})) \mid D_i, \bm{F}_{I^c_n} \right) = o_P(N^{-a})$ for some $a \geq 0.$ Then,
\begin{equation*}
%\label{Lfun}
\bm{L}_{n,N}  =  \frac{1}{n} \sum_{i \in \bm{I}_n} 
h(B_i; \hat \gamma(\bm{B}_i; \bm{F}_{I^c_n}))
\end{equation*}
%Suppose that
%\begin{equation}
%\label{normident3}
%\left\|  \hat \gamma(\bm{B}; \bm{F}_{I^c_n}) - \gamma(\bm{B}) %\right\|^2_{P_0,2} 
%= \E \left\{ \left\|  \hat \gamma(\bm{B}; \bm{F}_{I^c_n}) - %\gamma(\bm{B}) \right\|^2_{\mathbb{P}_n,2} \big| \bm{F}_{I^c_n} \right\}
%\end{equation}
%is $o_p(N^{-a})$ where $a \geq 0.$ Then,
satisfies
$\| \bm{L}_{n,N} \|_{\infty} = O_p(N^{-(1+a)/2}).$
\end{lemma}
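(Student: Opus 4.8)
The plan is to condition on the training fold, exploit the conditional independence that cross-fitting manufactures, and convert an assumed (random) conditional-variance bound into an unconditional rate. Since $\|\bm{L}_{n,N}\|_\infty = \max_{1\le j\le d}|L_{n,N,j}|$ and $d$ is fixed, it suffices to establish the rate for a single coordinate and then take a maximum over finitely many terms, which preserves the order. Fix $j$, write $S_n := L_{n,N,j} = n^{-1}\sum_{i\in\bm{I}_n} h_j(B_i;\hat\gamma(\bm{B}_i;\bm{F}_{I^c_n}))$, and let $\mathcal{G}_n := \sigma(\bm{F}_{I^c_n},\bm{I}_n)$ be the $\sigma$-field generated by the training data together with the random partition. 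Conditionally on $\mathcal{G}_n$ the map $\hat\gamma(\cdot;\bm{F}_{I^c_n})$ is a fixed measurable function, and because the partition is independent of the data, the validation observations $\{B_i:i\in\bm{I}_n\}$ are i.i.d.\ from $P_0$ and independent of $\mathcal{G}_n$; hence the summands $h_j(B_i;\hat\gamma(B_i;\bm{F}_{I^c_n}))$, $i\in\bm{I}_n$, are conditionally i.i.d.\ given $\mathcal{G}_n$. This is the step where cross-fitting earns its keep.

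Next I would pin down the two conditional moments. By the tower property, $E\{h_j\mid\mathcal{G}_n\}=E\big[E\{h_j\mid D_i,\mathcal{G}_n\}\mid\mathcal{G}_n\big]=0$ using the mean-zero hypothesis, so $E\{S_n\mid\mathcal{G}_n\}=0$. For the variance, the law of total variance gives, since the inner conditional mean vanishes, $\mathrm{Var}\{h_j\mid\mathcal{G}_n\}=E\big[\mathrm{Var}\{h_j\mid D_i,\mathcal{G}_n\}\mid\mathcal{G}_n\big]$; because the partition is independent of the data, $\mathrm{Var}\{h_j\mid D_i,\mathcal{G}_n\}=\mathrm{Var}\{h_j\mid D_i,\bm{F}_{I^c_n}\}=o_P(N^{-a})$, and applying the expectation to this nonnegative quantity yields $\sigma_n^2 := \mathrm{Var}\{h_j\mid\mathcal{G}_n\}=o_P(N^{-a})$. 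By conditional independence, $\mathrm{Var}\{S_n\mid\mathcal{G}_n\}=n^{-1}\sigma_n^2$.

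To produce the unconditional $O_p$ bound I would use conditional Chebyshev followed by a tightness argument. For fixed $M>0$,
\[
P\big(N^{(1+a)/2}|S_n|>M\big)=E\Big[P\big(|S_n|>MN^{-(1+a)/2}\mid\mathcal{G}_n\big)\Big]\le E\Big[\min\big(1,\; \sigma_n^2 N^{1+a}/(nM^2)\big)\Big].
\]
In the cross-fitting regime the two sample sizes are of the same order, $n\asymp N$, so $N^{1+a}/n=O(N^a)$, and the argument of the minimum equals $M^{-2}\,O(1)\cdot N^a\sigma_n^2 = M^{-2}\cdot o_P(1)$. The integrand is bounded by $1$ and tends to $0$ in probability as $N\to\infty$ for each fixed $M$, so bounded convergence gives $\limsup_N P\big(N^{(1+a)/2}|S_n|>M\big)=0$; this delivers $|S_n|=O_p(N^{-(1+a)/2})$ (in fact $o_p$). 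Taking the maximum over the $d$ coordinates then yields $\|\bm{L}_{n,N}\|_\infty=O_p(N^{-(1+a)/2})$.

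\textbf{Main obstacle.} The delicate point is the last step: the hypothesis $\mathrm{Var}\{h_j\mid D_i,\bm{F}_{I^c_n}\}=o_P(N^{-a})$ controls a \emph{random} quantity, so conditional Chebyshev produces a random upper bound rather than a deterministic rate. The careful move is to dominate the exceedance probability by the expectation of a quantity capped at $1$ and invoke bounded convergence (or, equivalently, tightness), instead of treating $\sigma_n^2$ as if it were deterministically $O(N^{-a})$. One must also verify that the $o_P(\cdot)$ bound stated conditionally on $D_i$ transfers cleanly to a statement conditional only on $\mathcal{G}_n$, which is exactly what the tower property applied to the nonnegative conditional variances provides. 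Everything else—the conditional i.i.d.\ reduction, the two moment computations, and the reassembly across coordinates—is routine.
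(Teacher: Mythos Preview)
Your proposal is correct and follows essentially the same route as the paper: condition on the training fold, establish conditional mean zero and an $o_P(N^{-a})$ conditional variance for each summand, then apply Chebyshev and divide by $n\asymp N$. The only cosmetic differences are that the paper conditions on the slightly larger $\sigma$-field $\{\bm{F}_{I^c_n},\bm{I}_n,(D_k:k\in\bm I_n)\}$ (so the per-summand conditional variance is directly $\mathrm{Var}(h_j\mid D_i,\bm{F}_{I^c_n})$, with the tower step taken afterward) and bounds $\|\cdot\|_\infty\le\|\cdot\|_2$ instead of working coordinate-wise; your explicit $\min(1,\cdot)$ plus bounded-convergence step is in fact more careful than the paper's one-line appeal to Chebyshev in handling the random $o_P$ variance bound.
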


\begin{proof}
The proof  relies on a variant of Chebyshev's inequality. 
Let ${\mathcal D}_n = \{ \bm I_n,  ( \bm{D}_k, k \in \bm I_n) \}.$ Then,
for each component $j = 1,\ldots,d$ it is easy to see that
\[
\E\left( \bm{L}_{n,N,j} \big| \bm{F}_{I^c_n} \right) 
~=~
\E \left\{ 
\E\left( \bm{L}_{n,N,j} \big| \bm{F}_{I^c_n},  {\mathcal D}_n \right) 
\big| \bm{F}_{I^c_n} \right\}
~=~0;
\]
this follows from calculating the inner expectation on the right-hand-side and using
the assumption that 
$ E\left( h_j(B_i; \hat \gamma(\bm{B}_i; \bm{F}_{I^c_n})) \mid D_i, \bm{F}_{I^c_n} \right) = 0$.
Using a similar conditioning argument, 
\begin{eqnarray*}
\mbox{var}\left( \bm{L}_{n,N,j} \big| \bm{F}_{I^c_n}\right) & = & 
\E\left\{ \mbox{var}\left( \bm{L}_{n,N,j} \big| \bm{F}_{I^c_n},  {\mathcal D}_n \right) \big| \bm{F}_{I^c_n} \right\}.
\end{eqnarray*}
Straightforward calculations give
\begin{eqnarray*}
\mbox{Var}\left( \bm{L}_{n,N,j} \big| \bm{F}_{I^c_n},  {\mathcal D}_n \right)
& = & 
 \frac{1}{n^2} \sum_{i \in \bm{I}_n}
 \mbox{Var}\left( h_j(B_i; \hat \gamma(\bm{B}_i; \bm{F}_{I^c_n})) \mid D_i, \bm{F}_{I^c_n} \right) 
 \end{eqnarray*}
implying that
\[
\mbox{Var}\left( \bm{L}_{n,N,j} \big| \bm{F}_{I^c_n} \right) 
 =  n^{-1} o_P(N^{-a}) \\
  =  o_p(N^{-(1+a)}),
\]
the last step following from the assumptions of the Lemma
and the fact that $n = O(N)$.  
Using a vector form of Chebyshev's inequality,
it can then be shown that $\| \bm{L}_{n,N} \|_2 = o_p(N^{-(1+a)/2});$ since $\| \bm{L}_{n,N} \|_{\infty} \leq \| \bm{L}_{n,N} \|_2$, the stated result follows.
\end{proof}

Under the conditions of the previous lemma, and upon inspection
of the proof, the following condition is also sufficient for the stated results to hold: for each $j = 1,\ldots,d$ and every $n$ sufficiently large,
\[
\mbox{Var}\left( h_j(B_i; \hat \gamma(\bm{B}_i; \bm{F}_{I^c_n})) \right)
=
E\left[ \mbox{Var}\left( h_j(B_i; \hat \gamma(\bm{B}_i; \bm{F}_{I^c_n})) \mid  \bm{F}_{I^c_n} \right) \right] = o(N^{-a}), ~~ i = 1,\ldots,n
\]
for some $a \geq 0.$ For, under this condition and the
other conditions in the lemma, it follows
that
$
\mbox{var}\left( \bm{L}_{n,N,j}  \right)= o(N^{-(1+a)}).$
We close this section by stating an assumption that will be used to prove 
Theorem \ref{thm:nonparmodels}.

\begin{assumption}[Nonparametric models for $\mu_{km}$ and $d\rho_{km},$ $m \geq 0.$] \label{ass:nonparm-new}
Let
${\cal S}^1_{n,v}$ be the training sample
that defines $P_{n,v}^1,$ with ${\cal S}^0_{n,v}$ denoting its complement. 
In addition, define
$w_{k0}(t) = 1$ and
$w_{km}(t)$ as 
in \eqref{eq:weight_def},
and set $dM^{(m)}_0(t)=dN(t)- d \rho_{km}(t) - (\sum_{j=0}^{m-1} A_{k-j}\tilde\beta_{j}+\Delta_{km}(t) \tilde\beta_m)dt.$
%equivalently, $dM^{(m)}_0(t) = dN^{(m)}_0(t) + \mu_{km}(t) \tilde \beta_m dt$
%where $dN^{(m)}_0(t) = dN(t)- (\sum_{j=0}^{m-1} A_{k-j}\tilde\beta_{j}+A_{k-m}  \tilde\beta_m)dt$ 
%is defined as before. 
Then, when the nuisance parameters are estimated nonparametrically
using the data in ${\cal S}^0_{n,v}$,
we impose the following rate conditions for each $k = 0,\ldots,
K-1$ and $m \geq 0:$
\begin{align*}
 E\biggl[ \mbox{Var}\biggl\{ 
 \sum_{k=0}^K \int_{k}^{k+1} Y(t)
w_{km}(t) \Delta_{km}(t) \{d\hat\rho_{kmv}(t)-d\rho_{km}(t)\} \mid 
{\cal S}^{0}_{n,v} \biggr\} \biggr] & = o(1);\\
E\biggl[ \mbox{Var}\biggl\{ \sum_{k=0}^K  \int_{k}^{k+1} Y(t)
w_{km}(t) \Delta_{km}(t) \{\hat\mu_{kmv}(t)-\mu_{km}(t)\} dt \mid 
{\cal S}^{0}_{n,v} \biggr\} \biggr] & =  o(1);\\
E\biggl[ \mbox{Var}\biggl\{ \sum_{k=0}^K  \int_{k}^{k+1} Y(t)
w_{km}(t) \{\hat\mu_{kmv}(t)-\mu_{km}(t)\}
d M^{(m)}_0(t)
%\{dN(t) - d\rho_{km}(t)-\Delta_{km}(t)\tilde\beta_0 dt \} 
\mid  
{\cal S}^{0}_{n,v} \biggr\}  \biggr] & =  o(1);\\
\mbox{E}\biggl\{ \sum_{k=0}^K \int_{k}^{k+1} 
 Y(t) w_{km}(t) \{\hat\mu_{kmv}(t)-\mu_{km}(t)\} \{
d\hat\rho_{kmv}(t) -  d\rho_{km}(t) \}
  \biggr\}
& = o(n^{-1/2}) \\
\mbox{E}\biggl\{ \sum_{k=0}^K \int_{k}^{k+1}
Y(t) w_{km}(t)
\{\hat\mu_{kmv}(t)-\mu_{km}(t)\}^2 dt
\biggr\}
& = o(n^{-1/2}). 
\end{align*}
\begin{comment}
\begin{align*}
 \mbox{Var}\biggl\{ & \int_{k}^{k+1} Y(t)
w_{km}(t) \Delta_{km}(t) \{d\hat\rho_{kmv}(t)-d\rho_{km}(t)\} \mid 
\bar{A}_{k-m-1}, \bar L_{k-m},
{\cal S}^{0}_{n,v} \biggr\} = o_P(1);\\
\mbox{Var}\biggl\{ &  \int_{k}^{k+1} Y(t)
w_{km}(t) \Delta_{km}(t) \{\hat\mu_{kmv}(t)-\mu_{km}(t)\} dt \mid 
\bar{A}_{k-m-1}, \bar L_{k-m}, {\cal S}^{0}_{n,v} \biggr\}  =  o_P(1);\\
\mbox{Var}\biggl\{ & \int_{k}^{k+1} Y(t)
w_{km}(t) \{\hat\mu_{kmv}(t)-\mu_{km}(t)\}
d M^{(m)}_0(t)
%\{dN(t) - d\rho_{km}(t)-\Delta_{km}(t)\tilde\beta_0 dt \} 
\mid  
\bar{A}_{k-m-1}, \bar L_{k-m}, {\cal S}^{0}_{n,v} \biggr\}  =  o_P(1);\\
\mbox{E}\biggl\{ & \int_{k}^{k+1} 
 Y(t) w_{km}(t) \{\hat\mu_{kmv}(t)-\mu_{km}(t)\} \{
d\hat\rho_{kmv}(t) -  d\rho_{km}(t) \}
 \mid 
\bar{A}_{k-m-1}, \bar L_{k-m}, {\cal S}^{0}_{n,v} \biggr\}
= o_p(n^{-1/2}) \\
\mbox{E}\biggl\{ & \int_{k}^{k+1}
Y(t) w_{km}(t)
\{\hat\mu_{kmv}(t)-\mu_{km}(t)\}^2 dt
\mid 
\bar{A}_{k-m-1}, \bar L_{k-m}, {\cal S}^{0}_{n,v} \biggr\}
= o_p(n^{-1/2}). 
\end{align*}
\end{comment}
\end{assumption}

\subsection{Asymptotic linearity of $\hat \beta_0$ with nonparametric $\mu_{k0},$
$d \rho_{k0}$}
\label{sec:ALnonpar-k0}

\noindent
The following proof imposes Assumption \ref{ass:nonparm-new}
to prove Theorem \ref{thm:nonparmodels} when $m=0.$ 
Let \[
f(\mu,d\rho) = \sum_{k=0}^K \int_{k}^{k+1} Y(t)\left\{A_{k}-\mu_{k0}(t)\right\}\{dN(t) - d\rho_{k0}(t)\} 
\]
and $g(\mu) = \sum_{k=0}^K \int_{k}^{k+1} Y(t)\left\{A_{k}-\mu_{k0}(t)\right\}^2dt$.  Define the estimator $\hat \beta_0$ as
\[
\hat \beta_0 = \frac{\sum_{v=1}^V P_{n,v}^1 f(\hat \mu_v,d \hat \rho_v) }{\sum_{v=1}^V P_{n,v}^1 g(\hat \mu_v)}, 
\]
where $\hat \mu_v$ and $d\hat \rho_v$ are estimates of $\mu$ and $d\rho$ obtained from the $v^{th}$ training sample. 
Defining $\hat \Delta_{k0v}(t) = A_{k}-\mu_{k0v}(t),$
note also that 
\[
\sum_{v=1}^V P_{n,v}^1 g(\hat \mu_v)
= \sum_{v=1}^V P_{n,v}^1 \sum_{k=0}^K \int_{k}^{k+1} Y(t) \hat \Delta^2_{k0v}(t)  dt
\]
and hence that
\[
\tilde \beta_0 = \frac{\sum_{v=1}^V P_{n,v}^1 \sum_{k=0}^K \int_{k}^{k+1} Y(t) \hat \Delta^2_{k0v}(t) 
%\left\{A_{k}-\hat\mu_{k0v}(t)\right\}^2
\tilde \beta_0dt}{\sum_{v=1}^V P_{n,v}^1 g(\hat \mu_v)}.
\]
Therefore, we can write
\begin{align}
\nonumber
    \hat \beta_0 &= \frac{\sum_{v=1}^V P_{n,v}^1 \sum_{k=0}^K \int_{k}^{k+1} Y(t)\hat\Delta_{k0v}(t)\{dN(t) - 
    d\hat\rho_{k0v}(t)\}}{\sum_{v=1}^V P_{n,v}^1 g(\hat \mu_v)}\\
    \label{beta0-rep}
    & = \tilde\beta_0 +\frac{V^{-1}\sum_{v=1}^V P_{n,v}^1 \sum_{k=0}^K \int_{k}^{k+1} Y(t)\hat\Delta_{k0v}(t)\{dN(t) - d\hat\rho_{k0v}(t)-\hat\Delta_{k0v}(t)\tilde\beta_0 dt\}}{V^{-1}\sum_{v=1}^V P_{n,v}^1 g(\hat \mu_v)}.
\end{align}
Fixing $v,$ consider the numerator
in the fraction on the right-hand side.
We can decompose
\[
\hat {\cal H}_{n,v} = P_{n,v}^1 \sum_{k=0}^K \int_{k}^{k+1} Y(t)\hat\Delta_{k0v}(t)\{dN(t) - d\hat\rho_{k0v}(t)-\hat\Delta_{k0v}(t)\tilde\beta_0 dt\}
\]
into a sum of six terms:
$\hat {\cal H}_{n,v} = (I)-(II)+(III)
-(IV)+(V)-(VI)$ where
\begin{align*}
(I) & = P_{n,v}^1 \sum_{k=0}^K \int_{k}^{k+1} Y(t) \Delta_{k0}(t) 
\{dN(t) - d\rho_{k0}(t)-\Delta_{k0}(t)\tilde\beta_0 dt\} \\
(II) & = P_{n,v}^1 \sum_{k=0}^K \int_{k}^{k+1} Y(t)
\Delta_{k0}(t) \{d\hat\rho_{k0v}(t)-d\rho_{k0}(t)\}\\
(III) & = \tilde \beta_0  P_{n,v}^1 \sum_{k=0}^K \int_{k}^{k+1} Y(t)
\Delta_{k0}(t) \{\hat\mu_{k0v}(t)-\mu_{k0}(t)\} dt\\
(IV) & = P_{n,v}^1 \sum_{k=0}^K \int_{k}^{k+1} Y(t)
\{\hat\mu_{k0v}(t)-\mu_{k0}(t)\}
\{dN(t) - d\rho_{k0}(t)-\Delta_{k0}(t)\tilde\beta_0 dt\}\\
(V) & = P_{n,v}^1 \sum_{k=0}^K \int_{k}^{k+1} Y(t)
\{\hat\mu_{k0v}(t)-\mu_{k0}(t)\}
\{d\hat\rho_{k0v}(t)-d\rho_{k0}(t)\}\\
(VI) & = \tilde \beta_0 P_{n,v}^1 \sum_{k=0}^K \int_{k}^{k+1} Y(t)
\{\hat\mu_{k0v}(t)-\mu_{k0}(t)\}^2 dt.
\end{align*}
Under the structural modeling assumptions of this paper,
%\[
%(I) = (P_{n,v}^1 - P_0) \sum_{k=0}^K %\int_{k}^{k+1} Y(t) \Delta_{k0}(t) 
%\{dN(t) - d\rho_{k0}(t)-\Delta_{k0}
%(t)\tilde\beta_0 dt\}.
%\]
%Moreover, since these assumptions also imply
$E[\Delta_{k0}(t) | 
\bar A_{k-1}, \bar L_k, Y(t) = 1] = 0$
and that
$E[
dN(t) - d\rho_{k0}(t)-\Delta_{k0}(t)\tilde\beta_0 dt
| 
\bar A_{k-1}, \bar L_k, Y(t) = 1] = 0.$
Hence, it can be seen that each of the
terms in $(II) - (IV)$ are mean zero (i.e., given
the training sample ${\cal S}^0_{n,v}$). The relevant variance conditions
in Assumption \ref{ass:nonparm-new}, combined with
Lemma \ref{lem:help2cvg}, then imply that 
$(II) = (III) = (IV) = o_P(n^{-1/2}).$
In contrast, neither $(V)$ nor $(VI)$ have mean zero conditional
on the training sample. Hence, the proof that these terms 
are negligible requires a different argument. 
In particular, following a similar argument
to Lemma \ref{lem:help2cvg} that employs Markov's inequality, 
the relevant expectation condition
in Assumption \ref{ass:nonparm-new} 
implies that $(VI) = o_P(n^{-1/2});$ a similar
argument can be used for $(V)$.

\begin{comment}
    Recalling that
$\mu_{k0}(t)=\mathbb{E}[A_{k}|\oA_{k-1},\oL_{k},D(\oA_{k},0) \geq t],$
we see that the condition
\[
\mbox{E}\biggl\{ & \int_{k}^{k+1}
Y(t)
\{\hat\mu_{k0v}(t)-\mu_{k0}(t)\}^2 dt
\mid 
\bar{A}_{k-1}, \bar L_k, {\cal S}^{0}_{n,v} \biggr\}
= o_p(n^{-1/2})
\]
in Assumption \ref{ass:nonparm-new} relevant to
establishing $(VI)$ follows if
\begin{equation}
\label{normident}
%\left\| \hat\mu_{k0v}(t)-\mu_{k0}(t) \right\|^2_{P_0,2} = 
\E \left\{ ( \hat\mu_{k0v}(t)-\mu_{k0}(t) )^2 \big| \bar{A}_{k-1}, \bar L_k, Y(t) = 1, {\cal S}^0_{n,v} \right\} = o_P(n^{-1/2});
\end{equation}
that is, the indicated condition is really 
just a rate condition on 
our ability to estimate the nuisance
parameter $\mu_{k0}(t).$ A similar comment
applies to the expectation condition in
Assumption \ref{ass:nonparm-new} used
to establish the behavior of $(V)$; however,
the dependence of the integral on $d(\hat\rho_{k0v}(t)-\rho_{k0}(t))$
does not allow for a similarly easy
sufficient condition.
\end{comment}

In view of the results established above,
and using term $(I),$ the asymptotics
for cross-fit estimators lead to
the equivalence result
\[
V^{-1} \sum_{v=1}^V\hat {\cal H}_{n,v} 
= (P_{n} - P_0) \sum_{k=0}^K \int_{k}^{k+1} Y(t) \Delta_{k0}(t) 
\{dN(t) - d\rho_{k0}(t)-\Delta_{k0}(t)\tilde\beta_0 dt\}
+ o_P(n^{-1/2});
\]
hence, returning to \eqref{beta0-rep}, and 
using arguments analogous to 
Section \ref{sec:par-proof-m.eq.0},
we now have
\[
\hat \beta_0 -
\tilde\beta_0 =
(P_{n} - P_0)
\frac{\sum_{k=0}^K \int_{k}^{k+1} Y(t) \Delta_{k0}(t) 
\{dN(t) - d\rho_{k0}(t)-\Delta_{k0}(t)\tilde\beta_0 dt\}}{P_0 g(\mu)}
+ o_P(n^{-1/2}).
\]

\subsection{Asymptotic linearity of $\hat \beta_m$ with nonparametric $\mu_m$}

This proof
imposes Assumption \ref{ass:nonparm-new}
to prove Theorem \ref{thm:nonparmodels} when $m\geq 1.$ The proof for a given $m \geq 1$   proceeds by induction; we will
assume that $\hat{\beta}_{j}-\tilde\beta_{j}=(P_n-P_0) \phi^\star_{\tilde\beta_j} + o_p(n^{-1/2})$ for each $j \leq m-1,$ and subsequently use these
to derive that for $\hat \beta_m.$ The previous
section establishes the needed result for
$m=0.$ We note that the influence function
$\phi^\star_{\tilde\beta_j}$ will in general
differ from $\phi_{\tilde\beta_j}$
as defined in Section \ref{sec:ALpar-km}.

In general, fix $m \geq 1$ and  let 
$f(\mu,w,\beta_{0:(m-1)})=\sum_{k=0}^{K}\int_{k}^{k+1} Y(t) w_{k}(t) \Delta_{km}(t)
     \{dN(t)-  d\rho_{km}(\bar A_{k-m-1},\bar L_{k-m})-\sum_{j=0}^{m-1} A_{k-j}\beta_{j}dt\}$ 
and 
$g(\mu,w)=\sum_{k=0}^{K}\int_{k}^{k+1} Y(t){w}_{km}(t) \Delta_{km}^2(t)dt$ where $\Delta_{km}(t)=A_{k-m} - \mu_{km}(t)$. 
%We recall that $d\rho_{km}(t)\equiv d\rho_{km}(t,\bar A_{k-m-1},\bar L_{k-m})$ and $\mu_{km}\equiv \mu_{km}(t)$.
Define $\hat{\beta}_m$ as
\begin{align}
    \hat{\beta}_m=\frac{\sum_{v=1}^V P_{n,v}^1 f(\hat{\mu}_v,\hat{w},d\hat \rho_{kmv},\hat{\beta}_{0:(m-1)})}{\sum_{v=1}^V P_{n,v}^1 g(\hat{\mu}_v,\hat{w})},
\end{align}
where $\hat{w}_{km}(t)$ is defined using the full-sample
estimates $\hat{\alpha}_0,\ldots,\hat{\alpha}_{m-1}$. 
We begin by writing
\begin{align}
    \hat{\beta}_m &= \frac{V^{-1} \sum_{v=1}^V P_{n,v}^1\sum_{k=0}^{K}\int_{k}^{k+1}Y(t)\hat{w}_{km}(t)\hat\Delta_{kmv}(t)\{dN(t)- d\hat \rho_{kmv}(t)-\sum_{j=0}^{m-1}A_{k-j} \hat{\beta}_{j}dt\}}{V^{-1} \sum_{v=1}^V P_{n,v}^1g(\hat{\mu},\hat w)} \nonumber\\
    \label{eq:beta-nonpar}
    &= \tilde{\beta}_m+\left\{V^{-1} \sum_{v=1}^V P_{n,v}^1 g(\hat{\mu},\hat w)\right\}^{-1}\times 
    \left\{ V^{-1} \sum_{v=1}^V \hat {\cal W}_{n,v}
    \right\},
\end{align}
where 
\begin{align*}
\hat {\cal W}_{n,v} = 
P_{n,v}^1\sum_{k=0}^{K}\int_{k}^{k+1}Y(t)\hat{w}_{km}(t)\hat\Delta_{kmv}(t)\{dN(t)- d\hat \rho_{kmv}(t)-\sum_{j=0}^{m-1}A_{k-j} \hat{\beta}_{j}dt-\hat\Delta_{kmv}(t)\tilde\beta_mdt\}
%\label{eq:nonparm1}
\end{align*}
and
$\hat\Delta_{kmv}(t)=A_{k-m}-\hat{\mu}_{kmv}(t)$.

Under Assumption (\ref{ass:parm}.b), we recall that
\begin{align*}
 \hat{w}_{km}(t)- w_{km}(t)\ &=\ \dot w_{km \alpha}(t)(\hat\alpha_{k}-\tilde\alpha_{k})+o_P(n^{-1/2})
 %\nonumber\\ &
 =(P_n-P_0)\dot w_{km \tilde\alpha}(t) \phi_{km \tilde\alpha}+o_P(n^{-1/2});  
 \end{align*}
 in addition, we recall that
 $\hat{\beta}_{j}-\tilde\beta_{j}=(P_n-P_0) \phi^\star_{\tilde\beta_j} + o_p(n^{-1/2})$ for each $j \leq m-1.$
 Using the implications that
 $\hat{w}_{km}(t)- w_{km}(t) = O_P(n^{-1/2})$
 and $\hat{\beta}_{j}-\tilde\beta_{j}
 = O_P(n^{-1/2}),$ along with
 Assumption \ref{ass:nonparm-new},
 the term $\hat {\cal W}_{n,v}$
 can be decomposed into 
 a sum of six terms:
$\hat {\cal W}_{n,v} = (I)-(II)+(III)
+(IV)-(V)-(VI)$ where
\begin{align*}
(I) & = P_{n,v}^1 \sum_{k=0}^K \int_{k}^{k+1} 
Y(t)\hat{w}_{km}(t) \Delta_{km}(t)\{dN(t)- d\rho_{km}(t)-\sum_{j=0}^{m-1}A_{k-j} \hat{\beta}_{j}dt-\Delta_{km}(t)\tilde\beta_mdt\}
\\
(II) & = P_{n,v}^1 \sum_{k=0}^K \int_{k}^{k+1} Y(t)
w_{km}(t) \Delta_{km}(t) \{d\hat\rho_{kmv}(t)-d\rho_{km}(t)\} + o_P(n^{-1/2}) \\
(III) & = \tilde \beta_m  P_{n,v}^1 \sum_{k=0}^K \int_{k}^{k+1} Y(t)
w_{km}(t) \Delta_{km}(t) \{\hat\mu_{kmv}(t)-\mu_{km}(t)\} dt + o_P(n^{-1/2})\\
%\{dN(t)- d\rho_{km}(t)-\sum_{j=0}^{m-1}A_{k-j} \hat{\beta}_{j}dt-\Delta_{km}(t)\tilde\beta_mdt\}\\
(IV) & = P_{n,v}^1 \sum_{k=0}^K \int_{k}^{k+1} Y(t)
w_{km}(t)
\{\hat\mu_{kmv}(t)-\mu_{km}(t)\}
\{d\hat\rho_{kmv}(t)-d\rho_{km}(t)\}
+ o_P(n^{-1/2})\\
(V) & = \tilde \beta_0 P_{n,v}^1 \sum_{k=0}^K \int_{k}^{k+1} Y(t) w_{km}(t)
\{\hat\mu_{kmv}(t)-\mu_{km}(t)\}^2 dt
+ o_P(n^{-1/2}) \\
(VI) & = P_{n,v}^1 \sum_{k=0}^K \int_{k}^{k+1} Y(t)
w_{km}(t) 
\{\hat\mu_{kmv}(t)-\mu_{km}(t)\}
dM^{(m)}_0(t) + o_P(n^{-1/2})
\end{align*}
The results in $(II)-(V)$
make use of the assertion that
$\hat{w}_{km}(t)- w_{km}(t) = O_P(n^{-1/2}),$
along with the implied result from Assumption 
\ref{ass:nonparm-new} that this difference
multiplies a term that is (at least) $o_P(1).$
The result in (VI) makes use of 
these same results, along with 
$\hat{\beta}_{j}-\tilde\beta_{j}
 = O_P(n^{-1/2}),$ and introduces
$d M^{(m)}_0(t)$ 
as defined in Assumption \ref{ass:nonparm-new}.

The relevant variance conditions
in Assumption \ref{ass:nonparm-new}, combined with
Lemma \ref{lem:help2cvg}, further imply 
that each of the leading terms in $(II) - (VI)$
are $o_P(n^{-1/2}).$ Hence, the large sample
behavior is determined by $(I)$. Using arguments
similar to the above, term $(I)$ can be written
as ${\cal Q}_{n,v} + o_P(n^{-1/2})$ where
\begin{align}
\label{I-1}
{\cal Q}_{n,v} = 
P_{n,v}^1 & \sum_{k=0}^K
\int_{k}^{k+1} 
Y(t)w_{km}(t) \Delta_{km}(t) dM^{(m)}_0(t) \\
\label{I-2}
& +
P_{n,v}^1 \sum_{k=0}^K
\int_{k}^{k+1} 
Y(t)(\hat w_{km}(t) - w_{km}(t)) \Delta_{km}(t) dM^{(m)}_0(t) \\
\label{I-3}
& -P_{n,v}^1 \sum_{k=0}^K
\int_{k}^{k+1} 
Y(t) \hat w_{km}(t) \Delta_{km}(t) 
\left[
\sum_{j=0}^{m-1}A_{k-j} (\hat{\beta}_{j} - \tilde{\beta}_j) 
\right] dt 
\end{align}
Using $
 \hat{w}_{km}(t)- w_{km}(t)\ =
 (P_n-P_0)\dot w_{km \tilde\alpha}(t) \phi_{km \tilde\alpha}+o_P(n^{-1/2}),$
 we can show that
\[
V^{-1} \sum_{v=1}^V
\mbox{\eqref{I-2}}
=
(P_n - P_0) \sum_{k=0}^K \phi_{km \tilde\alpha} \,
\xi^{(3)}_{km\tilde\alpha} + o_P(n^{-1/2}),
\]
where
$\xi^{(3)}_{km\tilde\alpha}  =
P_0
\int_{k}^{k+1} 
Y(t) \dot w_{km \tilde\alpha}(t)  \Delta_{km}(t) dM^{(m)}_0(t).$
Similarly, using 
$\hat{\beta}_{j}-\tilde\beta_{j}=(P_n-P_0) \phi^\star_{\tilde\beta_j} + o_p(n^{-1/2})$ for each $j \leq m-1$ and using arguments
identical to those used simplify
\eqref{eq:m2}, we have
\begin{align*}
   V^{-1} \sum_{v=1}^V (\ref{I-3}) &= 
(P_n-P_0) \left\{ 
\sum_{j=0}^{m-1} \phi^{\star}_{\tilde\beta_j} \sum_{k=0}^{K}  
    \int_{k}^{k+1}  \gamma_{kmj}(t) dt 
    \right\} + o_p(n^{-1/2}),
\end{align*}
where $\gamma_{kmj}(t) = P_0  Y(t) w_{km}(t)
\Delta_{km}(t) 
%{A_{k-m}-{\mu}_{km}(t)\}
A_{k-j}
=
P_0  Y(t) w_{km}(t)
\{A_{k-m}-{\mu}_{km}(t)\}
A_{k-j}$ is the same 
as before. Arguing similarly
to the case where $m=0,$
it follows that
\begin{align*}
V^{-1} \sum_{v=1}^V
{\cal Q}_{n,v}
= (P_n-P_0) 
\biggl[ &
\sum_{k=0}^K \biggl\{ 
\int_{k}^{k+1} 
Y(t)w_{km}(t) \Delta_{km}(t) dM^{(m)}_0(t) \\
&
+
 \phi_{km \tilde\alpha} \,
\xi^{(3)}_{km\tilde\alpha}
-
\sum_{j=0}^{m-1} \phi^{\star}_{\tilde\beta_j}   
    \int_{k}^{k+1}  \gamma_{kmj}(t) dt
    \biggr\}
    \biggr]+ o_p(n^{-1/2}).
\end{align*}
Returning to \eqref{eq:beta-nonpar},
putting everything together and
using arguments analogous to 
Section \ref{sec:ALpar-km}
now leads to
\begin{align*}
\hat{\beta}_m  - \tilde\beta_m
= \frac{1}{P_0 g(\mu,w)} & \times
(P_n-P_0) \sum_{k=0}^{K} \biggl[
 \int_{k}^{k+1}Y(t){w}_{km}(t)
 \Delta_{km}(t)
% \{A_{k-m}-{\mu}_{km}(t)\} 
 dM^{(m)}_0(t) \\
& + 
\phi_{km \tilde \alpha} 
 \, \xi^{(3)}_{km \tilde \alpha}
-
 \sum_{j=0}^{m-1} \phi^{\star}_{\tilde\beta_j} 
    \int_{k}^{k+1}  \gamma_{kmj}(t) dt 
\biggr]
+  o_P(n^{-1/2}).
\end{align*}

\clearpage
\section{Additional simulation details and results}\label{app:sim}

\subsection{Model specification}
The parametric exposure model allow for linear effects of covariates $L_1,L_2$, and $Q$. The nonparametric exposure and nuisance models used an ensemble model approach based on the R package SuperLearner. We included in the ensemble a parametric model and lightGBM. The lightGBM approach was incorporated with four different specifications: by combining 50 or 200 rounds with a learning rate of 0.01 or 0.1 and depth of 3.

\subsection{Additional results}

\begin{table}[ht]
\caption{Power for $\beta_0,\beta_1,\beta_2$. Based on proportion of simulation replicates where the lower bound of the 95\% confidence interval was greater than zero.}
\centering
\begin{tabular}{rlrrrrrr}
  \toprule
Scenario & Model & \multicolumn{3}{c}{$n=2000$} & \multicolumn{3}{c}{$n=5000$}\\
&& $\beta_0$ & $\beta_1$ & $\beta_2$ & $\beta_0$ & $\beta_1$ & $\beta_2$ \\ 
  \midrule
1 & Parametric & 0.98 & 0.44 & 0.16 & 1.00 & 0.81 & 0.37 \\ 
   & Nonparametric & 0.99 & 0.39 & 0.18 & 1.00 & 0.82 & 0.36 \\ 
   & Nonparametric Robust & 0.93 & 0.38 & 0.19 & 1.00 & 0.86 & 0.40 \\ 
  2 & Parametric & 0.86 & 0.29 & 0.13 & 1.00 & 0.66 & 0.18 \\ 
   & Nonparametric & 0.85 & 0.34 & 0.14 & 1.00 & 0.67 & 0.22 \\ 
   & Nonparametric Robust & 0.78 & 0.30 & 0.15 & 1.00 & 0.74 & 0.26 \\ 
   \bottomrule
\end{tabular}
\end{table}

\begin{table}[ht]
\caption{Type I error rate for $\beta_3,\beta_4$. Determined by the proportion of simulation replicates where the 95\% confidence interval upper bound was less than zero (`-') or the lower bound was greater than zero (`+'). The sum of `-' and `+' columns give the type I error rate.}
\centering
\begin{tabular}{rlrrrrrrrrrr}
  \toprule
Scenario & Model & \multicolumn{4}{c}{$n=2000$} & \multicolumn{4}{c}{$n=5000$}\\
&& \multicolumn{2}{c}{$\beta_3$} & \multicolumn{2}{c}{$\beta_4$} & \multicolumn{2}{c}{$\beta_3$} & \multicolumn{2}{c}{$\beta_4$} \\ 
&&-&+&-&+&-&+&-&+\\
  \midrule
1 & Parametric & 0.04 & 0.01 & 0.02 & 0.03 & 0.04 & 0.03 & 0.04 & 0.00 \\ 
   & Nonparametric & 0.03 & 0.01 & 0.02 & 0.02 & 0.05 & 0.01 & 0.05 & 0.02 \\ 
   & Nonparametric Robust & 0.07 & 0.06 & 0.05 & 0.03 & 0.10 & 0.05 & 0.04 & 0.03 \\ 
  2 & Parametric & 0.03 & 0.04 & 0.01 & 0.05 & 0.02 & 0.01 & 0.01 & 0.02 \\ 
   & Nonparametric & 0.04 & 0.04 & 0.04 & 0.07 & 0.00 & 0.01 & 0.01 & 0.03 \\ 
   & Nonparametric Robust & 0.06 & 0.03 & 0.05 & 0.08 & 0.02 & 0.02 & 0.03 & 0.03 \\ 
   \bottomrule
\end{tabular}
\end{table}

\section{Additional data analysis details and results}\label{app:data}
%%%%%%%%%%%%%%%% Motivating example %%%%%%%%%%%%%%%% 
%\section{Recurrent CVD hospitalizations in the Medicare population}\label{app:data}
%%%%%%%%%%%%%%%%%%%%%%%%%%%%%%%%%%%%%%%%%%%%%%%%%%%%

For our data analysis we consider 10,438,899 person-months of data representing a longitudinal cohort of $n=299,661$ Medicare Fee for Service (FFS) beneficiaries living in the Northeast region of the United States (as defined by the US Census Bureau, which includes states: Connecticut, Maine, Massachusetts, New Hampshire, New Jersey, New York, Pennsylvania, Rhode Island, and Vermont). This dynamic cohort is a 10\% random sample of all Medicare beneficiaries who turned 65 between January 1$^{st}$, 2000 and December 31$^{st}$, 2010. We follow each individual until death, censoring by leaving Medicare FFS or moving outside the Northeast region of the US, or administrative censoring after 36 months follow-up (Table \ref{tab:cohort_char}); we use the first 6 months of follow-up as a baseline such that $t=0$ at 65.5 years of age for each individual. 

The data structure for each individual in the study cohort consists of five parts: (1) CVD hospitalizations, with day of admission and diagnosis codes for each hospitalization; (2) day of death or censoring, if it occurred during the follow-up period; (3) individual-level covariates that include date of birth, race, sex, ZIP code of residence, and Medicaid eligibility (a proxy for low-income); (4) neighborhood characteristics based on US Census and annual American Community Survey data; (5) exposures to air pollution in the neighborhood (Figure \ref{fig:pm25_map}). Neighborhood characteristics and exposures are resolved at postal ZIP codes. 

\begin{table}[!ht]
\caption{Cohort characteristics by year of entry into Medicare FFS at age 65. Every individual is followed for a baseline period of 6 months and then during a 30 month study period. Individuals who were censored during the baseline period are not included in the study.}\label{tab:cohort_char}
\centering
\scriptsize
\begin{tabular}{rrrrrrrrrrrr}
  \toprule[1pt]
 & 2000 & 2001 & 2002 & 2003 & 2004 & 2005 & 2006 & 2007 & 2008 & 2009 & 2010 \\ 
  \midrule
N & 22942 & 25207 & 25563 & 26884 & 27474 & 26761 & 27551 & 27572 & 30523 & 30311 & 28873 \\ 
\multicolumn{1}{l}{\textbf{Sex}}\\
Male & 10796 & 11812 & 12001 & 12678 & 12922 & 12707 & 13003 & 13017 & 14571 & 14539 & 13842 \\ 
Female &   12146 & 13395 & 13562 & 14206 & 14552 & 14054 & 14548 & 14555 & 15952 & 15772 & 15031 \\ 
\addlinespace
\multicolumn{1}{l}{\textbf{Race}}\\
White & 19884 & 21857 & 22254 & 23237 & 23822 & 22923 & 23635 & 23615 & 26419 & 26287 & 24539 \\ 
Black & 2006 & 2198 & 2092 & 2235 & 2292 & 2287 & 2306 & 2282 & 2304 & 2278 & 2372 \\ 
Asian &  260 & 290 & 317 & 390 & 342 & 411 & 490 & 504 & 509 & 487 & 561 \\ 
Hispanic &  252 & 279 & 267 & 333 & 341 & 389 & 346 & 355 & 397 & 389 & 432 \\  
Other/Unknown & 540 & 583 & 633 & 689 & 677 & 751 & 774 & 816 & 894 & 870 & 969 \\ 
\addlinespace
\multicolumn{2}{l}{\textbf{Medicaid eligible}}\\
No & 20751 & 22867 & 23201 & 24353 & 25003 & 24296 & 25022 & 25202 & 28082 & 27839 & 26250 \\ 
Yes & 2191 & 2340 & 2362 & 2531 & 2471 & 2465 & 2529 & 2370 & 2441 & 2472 & 2623 \\ 
\addlinespace
%\multicolumn{4}{l}{\textbf{Hospitalizations (primary code)}}\\
%Atrial Fibrillation & \\
%Cardiac Arrest & \\
%AMI* & \\
%\addlinespace
\multicolumn{6}{l}{\textbf{Terminal events or censoring during study period}}\\
Died age $<68$ & 1089 & 1138 & 1087 & 1058 & 1067 & 1008 & 991 & 963 & 1114 & 1058 & 992\\
Censored age $<68$ & 1995 & 2086 & 2295 & 3147 & 3858 & 4182 & 4475 & 4398 & 3991 & 3790 & 3747\\
   \bottomrule[1pt]
%\multicolumn{4}{l}{* acute myocardial infarction}\\
\end{tabular}
\end{table}

\begin{figure}[!ht]
    \centering
    \subcaptionbox{Distribution of annual average PM$_{2.5}$ concentration ($\mu g/m^3)$ across the Northeast region of the United States. The color scale indicates exposure concentration with three years shown to depict the time-varying nature of exposure.\label{fig:pm25_map}}{\includegraphics[width=\textwidth]{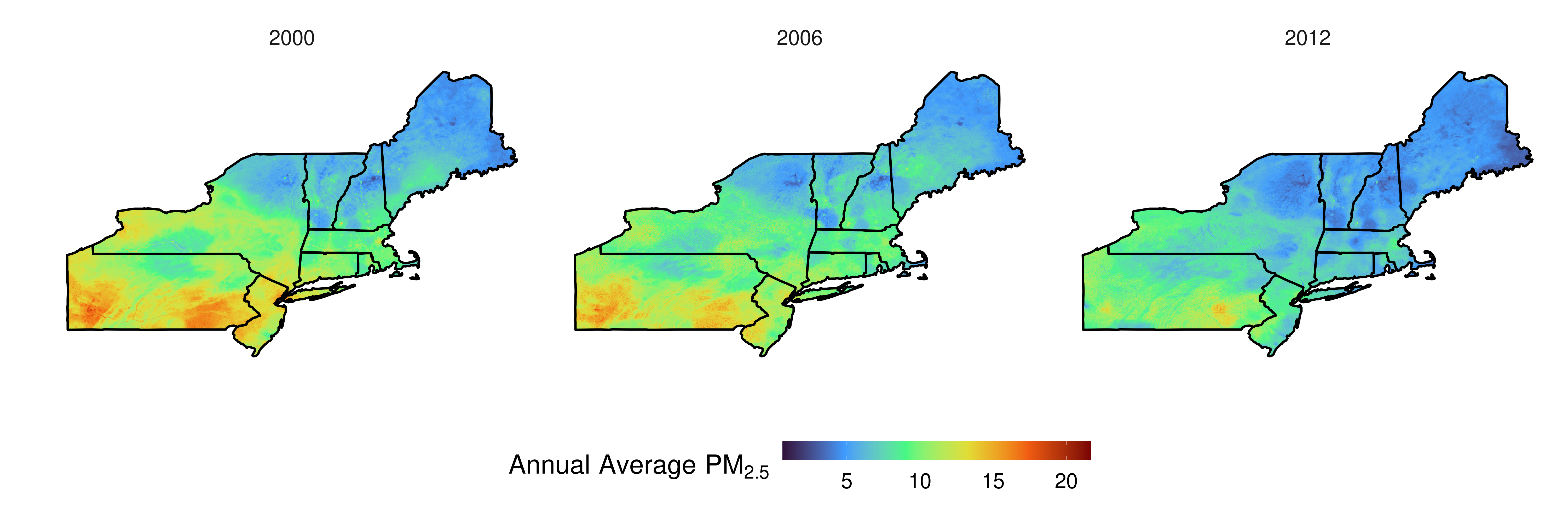}}
    \subcaptionbox{Smoothed trends of monthly average PM$_{2.5}$ concentration ($\mu g/m^3)$ among sampled individuals, separated by state (color and line pattern).\label{fig:pm_trend}}{\includegraphics[width=0.8\textwidth]{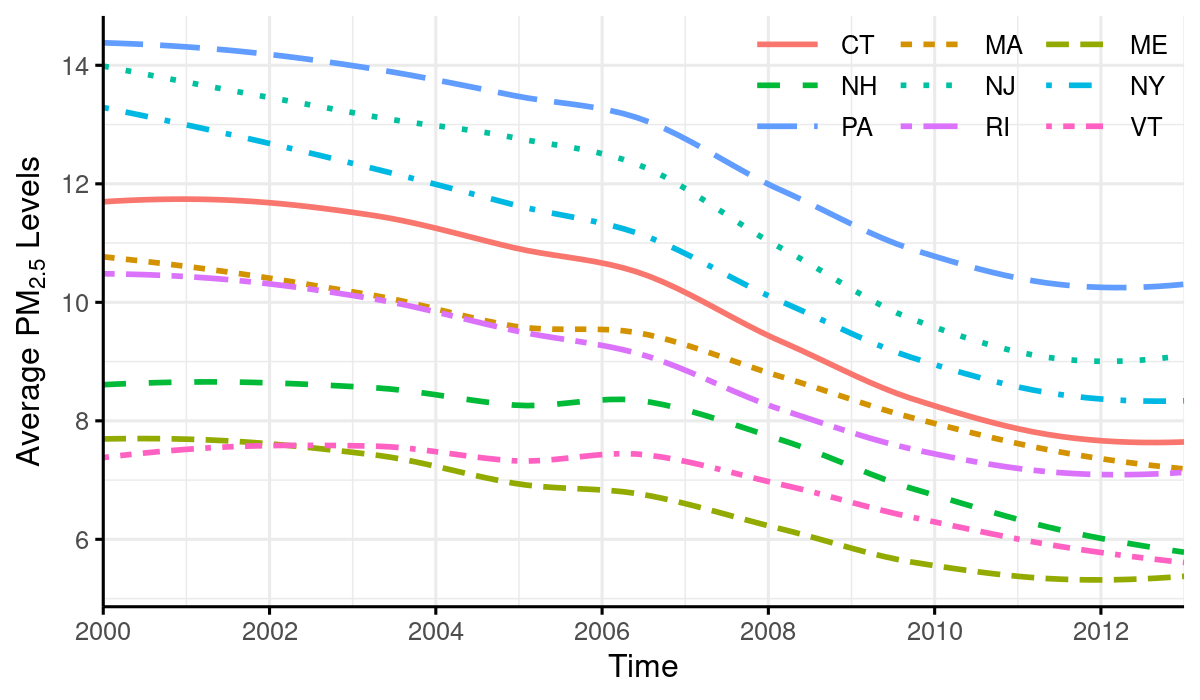}}
    \caption{PM$_{2.5}$ trends in Northeast region of the United States.}
\end{figure}

To ensure a proper baseline confounding adjustment, we consider the baseline period as the 6 months after an individual turns 65. Within this time-frame we record baseline exposure and covariates, including past CVD hospitalizations.  It is important to note that time is relative to each individual, as the nature of the study design allows people to enter at different calendar times. To account for the dynamic cohort, calendar time (e.g., year and month) is a covariate in the study.

\subsection{CVD Hospitalizations}
Hospitalization admissions are recorded in Medicare Part A MedPAR files. We identify hospitalizations as CVD-related based on the primary medical diagnosis billing code that follows the International Classification of Diseases (ICD, including the revisions ICD-9 for the years 2000-2015 and ICD-10 for years 2015-2016). We consider three diagnostic classifications for CVD hospitalization: atrial fibrillation (ICD-9: 427.3; ICD-10: I48), cardiac arrest (ICD-9: 427.5; ICD-10: I46), and acute myocardial infarction (ICD-9: 410; ICD-10: I21). If there are multiple hospitalization records within a 2-day period (for example, often an individual is discharged from one hospital and admitted to another), we only consider the first as a hospitalization event.

\subsection{Exposure and Covariates}
We obtained daily exposure predictions for PM$_{2.5}$ on a 1 km$^2$ grid from a well-validated ensemble learning model that achieved a cross-validated $R^2$ of 0.89 \citep{Di2019AnResolution}. We calculated monthly exposures at each ZIP code by averaging daily exposure from all grid centroids within the ZIP code boundaries. We assigned the monthly PM$_{2.5}$ measurements to each individual based on their ZIP code of residence from MBSF. Figure \ref{fig:pm25_map} shows the spatial distribution of exposure while \ref{fig:pm_trend} shows the PM$_{2.5}$ trend by state among Medicare beneficiaries included in our study. We see higher exposures on average across New Jersey, New York, and Pennsylvania compared to other states. In 2000, only Maine, New Hampshire and Vermont had an average exposure below 9 $\mu g/m^2$, while at the end of 2012 all states except New Jersey and Pennsylvania have average PM$_{2.5}$ concentrations below this level.

The Medicare Master Beneficiary Summary File (MBSF) contains individual information on date of birth, sex, race (White, Black, Hispanic, Asian/Pacific Islander, Native American, Other), an indicator of Medicaid eligibility, time enrolled in private Medicare Advantage supplemental insurance plans, ZIP code of residence, and date of death. Using the ZIP code of residence, we link neighborhood-level demographic and socioeconomic variables derived from the 2000 US Census and 2011-2016 annual American Community Survey (ACS). Area-level demographic variables include population density, percent of individuals who identify as Hispanic, Black, Asian, White, or other race/ethnicity, and percent of individuals who graduated from high school. Socioeconomic variables include median home price, median home income, percentage of owner-occupied housing, and percent of individuals living below the federal poverty level. We also link county-level average body mass index and percent of smoking population from the Center for Disease Control's Behavior Risk Factor Surveillance System surveys from 2000-2012. Area-level covariates missing during a particular year in a ZIP code are imputed by linear interpolation. Individuals who changed ZIP codes during follow-up were excluded from the study as the month of year of the move was unknown. Non-administrative censoring occurred if missing covariates were present.

\subsection{Data limitations}
There are several limitations in the data that are important to mention. First, many of the area-level confounders are updated annually while the time-varying PM$_{2.5}$ used in the analysis are updated monthly. However, area-level confounders are relatively steady year-over-year (average correlation = 0.936) and machine learning approaches for the exposure model ($\mu_{km}$) can account for differences within years using indicators for each month. Second, the monthly ZIP code average exposure may be subject to measurement error with respect to each individual living in that ZIP code. Previous work has assessed the sensitivity of estimated health effects due to the potential misalignment of spatially aggregated exposure and indicated a bias toward the null \citep{Kioumourtzoglou2014ExposureStudies,Wei2022TheMortality}. Future work to incorporate regression calibration to accommodate exposure measurement error could reduce the potential bias in our models \citep{Josey2023EstimatingMortality}. Third, we note that subjects are not truly at risk for rehospitalization while currently in the hospital (i.e., between admission and discharge); our choice of at risk process does not account for this ``gap'' in being at risk. However, this is not considered a significant limitation of our analyses. In particular, the median time from hospital admission to discharge is 3 days with 25\% hospitalizations lasting longer than 6 days and 5\% longer than 13 days; given that there are only 0.75 primary-cause CVD hospitalizations for every 1,000 person-months in our data and only 1,051 individuals (0.35\%) with 2 or more hospitalizations during the 30 month follow-up, the amount of time not at risk is a tiny fraction of available follow-up. Certainly, additional considerations for the at risk process, $Y(t)$, may be required for data sets with much larger event rates and long periods when individuals cannot record any events.

\subsection{Model Specification}
The parametric estimation approach for the exposure model $\mu_{km}(t)$ included linear effects of all neighborhood covariates measured during period $k-m$, linear effects for other exposures (monthly average NO$_2$, O$_3$, temperature, humidity) from periods $k-m$ through $k-m-6$ and linear effects for monthly average PM$_{2.5}$ from periods $k-m-1$ through $k-m-6$, and intercepts for every month and year combination. The model also parameterized a smooth effect of year (natural spline with $df=3$) for each state and smooth effect of year (natural spline with $df=3$) interacted smooth effect of ZIP code centroid (natural spline with $df=5$ for each latitude and longitude).

The nonparametric exposure and nuisance models used the same ensemble model approach from the simulation, based on the R package SuperLearner. We included in the ensemble a parametric model with linear effects of each covariate and lightGBM. The lightGBM approach was incorporated with four different specifications: by combining 50 or 200 rounds with a learning rate of 0.01 or 0.1 and depth of 3. All covariates used in the parametric estimator were also included in the robust estimator. In the nuisance function, we also included individual-level covariates race, sex, Medicaid eligibility, and number of prior CVD hospitalizations.

\clearpage
\subsection{Additional results}
\begin{table}[!ht]
\caption{Individual lag estimates (bootstrap SD) of death or CVD hospitalization rate per 100,000 Medicare beneficiaries resulting from a 10$\mu g/m^3$ increase in PM$_{2.5}$.}
\centering
\small
\begin{tabular}{rrrrr}
  \toprule
Lag & \multicolumn{1}{c}{$\alpha$} & $\beta$ (Parametric) & $\beta$ (Nonparametric) & $\beta$ (Nonparametric Robust) \\ 
  \midrule
  0 & 33.61 (15.66) & 12.51 (10.70) & 14.65 (6.66) & 6.71 (7.50) \\ 
    1 & -3.44 (15.65) & 3.53 (12.50) & 10.77 (6.46) & 9.43 (7.54) \\ 
    2 & 0.27 (16.12) & -6.64 (12.41) & 4.76 (5.34) & 0.90 (6.71) \\ 
    3 & 11.48 (16.02) & 9.19 (13.19) & 0.89 (6.79) & -2.09 (8.51) \\ 
    4 & 26.15 (16.55) & -1.97 (12.42) & -1.31 (6.91) & -7.19 (8.10) \\ 
    5 & -34.54 (17.09) & 14.57 (13.27) & 6.78 (7.53) & 8.12 (7.39) \\ 
    6 & 18.46 (15.44) & -5.68 (10.99) & 8.68 (7.49) & 11.4 (7.52) \\ 
   \bottomrule
\end{tabular}
\end{table}

\begin{table}[!ht]
\caption{Cumulative effect ($\sum_{m=0}^l\beta_m$) estimates (bootstrap SD) of death or CVD hospitalization rate per 100,000 Medicare beneficiaries resulting from a 10$\mu g/m^3$ increase in PM$_{2.5}$.}
\centering
\small
\begin{tabular}{rrrrr}
  \toprule
Lag & \multicolumn{1}{c}{$\alpha$} & $\beta$ (Parametric) & $\beta$ (Nonparametric) & $\beta$ (Nonparametric Robust) \\ 
  \midrule
  0 & 33.61 (15.66) & 12.51 (10.70) & 14.65 (6.66) & 6.71 (7.50) \\ 
    1 & 30.17 (14.78) & 16.05 (11.63) & 25.42 (8.42) & 16.14 (9.03) \\ 
    2 & 30.44 (17.26) & 9.41 (12.71) & 30.18 (9.40) & 17.04 (10.45) \\ 
    3 & 41.92 (14.10) & 18.60 (11.81) & 31.06 (10.23) & 14.95 (11.82) \\ 
    4 & 68.07 (15.31) & 16.63 (14.08) & 29.76 (11.68) & 7.75 (14.40) \\ 
    5 & 33.53 (18.80) & 31.20 (15.05) & 36.54 (11.14) & 15.87 (12.44) \\ 
    6 & 52.00 (19.33) & 25.52 (13.83) & 45.22 (12.18) & 27.27 (12.55) \\ 
   \bottomrule
\end{tabular}
\end{table}

\end{document}